\setlist{noitemsep,topsep=0pt,parsep=0pt,partopsep=0pt}
\DeclareMathOperator{\LEs}{LE}
\DeclareMathOperator{\LCEs}{LCE}
\theoremstyle{definition}
\newtheorem{definition}{Definition}
\theoremstyle{remark}
\newtheorem{lemma}{Lemma}
\renewenvironment{proof}{\textit{Sketch of the proof.}}{\qed}
\newcommand{\mi}{\mathrm{i}}
\begin{document}

\title{Finite-time Lyapunov dimension and hidden attractor
 of the Rabinovich system
}

\author{N. V. Kuznetsov}
\email[]{Corresponding author email: nikolayv.kuznetsov@tdt.edu.vn, nkuznetsov239@gmail.com}
\affiliation{
Modeling Evolutionary Algorithms Simulation and Artificial Intelligence,
Faculty of Electrical \& Electronics Engineering,
Ton Duc Thang University, Ho Chi Minh, Vietnam}
\affiliation{Faculty of Mathematics and Mechanics, St. Petersburg State University,
Peterhof, St. Petersburg, Russia}
\affiliation{Department of Mathematical Information Technology,
University of Jyv\"{a}skyl\"{a}, Jyv\"{a}skyl\"{a}, Finland}
\author{G. A. Leonov}
\affiliation{Faculty of Mathematics and Mechanics, St. Petersburg State University,
Peterhof, St. Petersburg, Russia}
\affiliation{Institute of Problems of Mechanical Engineering RAS, Russia}
\author{T. N. Mokaev}
\affiliation{Faculty of Mathematics and Mechanics, St. Petersburg State University,
Peterhof, St. Petersburg, Russia}
\author{A. Prasad}
\affiliation{Department of Physics \& Astrophysics, Delhi University, India}
\author{M.D. Shrimali}
\affiliation{Central University of Rajasthan, Ajmer, India}

\date{\today}

\begin{abstract}
The Rabinovich system,
describing the process of interaction between waves in plasma, is considered.
It is shown that the Rabinovich system can exhibit a {hidden attractor}
in the case of multistability as well as a classical {self-excited attractor}.
The hidden attractor in this system can be localized
by analytical-numerical methods based on the {continuation} and {perpetual points}.
For numerical study of the attractors' dimension
the concept of {finite-time Lyapunov dimension} is developed.
A conjecture on the Lyapunov dimension of self-excited attractors
and the notion of {exact Lyapunov dimension} are discussed.
A comparative survey on the computation of the finite-time Lyapunov exponents
by different algorithms is presented
and an approach for a reliable numerical estimation of
the finite-time Lyapunov dimension is suggested.
Various estimates of the finite-time Lyapunov dimension
for the hidden attractor and hidden transient chaotic set
in the case of multistability are given.
\end{abstract}

\maketitle


\section{\label{sec:intro} Introduction}
One of the main tasks of the investigation of \emph{dynamical systems} is the study of established (limiting) behavior of the system after transient processes, i.e., the problem of localization and analysis of \emph{attractors} (limited sets of system's states,
which are reached by the system from close initial data after transient processes) \cite{Poincare-1892,Lyapunov-1892,LeonovR-1987}.
While trivial attractors (stable equilibrium points) can be easily found analytically,
the search of periodic and chaotic attractors can turn out to be a challenging problem
(see, e.g. famous 16th Hilbert problem \cite{Hilbert-1901}
on the number of coexisting periodic attractors in two dimensional polynomial systems,
which was formulated in 1900 and is still unsolved;
see also its generalization for multidimensional systems
with chaotic attractors \cite{LeonovK-2015-AMC}).
For numerical localization of an attractor 
one needs to choose an initial point in the basin of attraction and observe how the trajectory, starting from this initial point, after a transient process visualizes the attractor.
\emph{Self-excited attractors}, even coexisting in the case of \emph{multistability} \cite{PisarchikF-2014},
can be revealed numerically by the integration of trajectories,
started in small neighborhoods of unstable equilibria,
while \emph{hidden attractors} have the basins of attraction,
which are not connected with equilibria, and are hidden somewhere in the phase space
\cite{LeonovK-2013-IJBC,KuznetsovL-2014-IFACWC,LeonovKM-2015-EPJST,Kuznetsov-2016}.
Remark that in numerical computation of trajectory over a finite-time interval
it is difficult to distinguish a \emph{sustained chaos} from
a \emph{transient chaos}
(a transient chaotic set in the phase space,
which can nevertheless persist for a long time) \cite{GrebogiOY-1983}.
Thus, the search and visualization of hidden attractors
and transient sets in the phase space are challenging tasks
\cite{DudkowskiJKKLP-2016}.

In this paper we study hidden attractors and transient chaotic sets
in the Rabinovich system.
We show that the methods of numerical continuation and perpetual point
are helpful for localization and understanding of
hidden attractor in the Rabinovich system.

For the study of chaotic sets and dimension of attractors
the concept of the Lyapunov dimension \cite{KaplanY-1979}
was found useful and became widely spread
\cite{GrassbergerP-1983,EckmannR-1985,ConstantinFT-1985,AbarbanelBST-1993,BoichenkoLR-2005}.
Since in numerical experiments we can consider only finite time,
in this paper we develop the concept of the \emph{finite-time Lyapunov dimension} \cite{Kuznetsov-2016-PLA}
and an approach for its reliable numerical computation.
Various estimates of the finite-time Lyapunov dimension for the Rabinovich hidden attractor
in the case of multistability are given.

\section{\label{sec:intro} The Rabinovich system: interaction between waves in plasma}

Consider a system, suggested in 1978 by M.~Rabinovich~\cite{Rabinovich-1978,PikovskiRT-1978},
\begin{equation}
\begin{aligned}
	& \dot{x}  =  h y - \nu_1 x - y z, \\
	& \dot{y}  =  h x - \nu_2 y + x z, \\
	& \dot{z}  =  - z + x y,
\end{aligned}
\label{sys:rabinovich}
\end{equation}
describing the interaction of three resonantly coupled waves,
two of which are parametrically excited.
Here, the parameter $h$ is proportional to the
pumping amplitude and the parameters $\nu_{1,2}$
are normalized dumping decrements.

After the linear transformation (see, e.g., \cite{LeonovB-1992}):
\begin{equation}\label{xyzchange}
	\chi: (x,y,z) \to (\nu_1 \nu_2 h^{-1} y, \nu_1 x, \nu_1 \nu_2 h^{-1} z)
\end{equation}
and time rescaling:
\begin{equation}\label{timechange}
  	t \to \nu_1^{-1} t,
\end{equation}
we obtain a generalized Lorenz system:
\begin{equation}
\begin{aligned}
 & \dot{x}  =  - \sigma(x - y) - a y z,\\
 & \dot{y}  =  r x - y - x z,\\
 & \dot{z}  =  -b z + x y,
\end{aligned}
\label{sys:lorenz-general}
\end{equation}
where
\begin{equation}
	\sigma = \nu_1^{-1} \nu_2,\,
	b = \nu_1^{-1},\,
	a = -\nu_2^2 h^{-2},\,
	r = \nu_1^{-1} \nu_2^{-1}h ^{2}.
	\label{eq:params-relation}
\end{equation}

System \eqref{sys:lorenz-general} with  $a = 0$
coincides with the classical Lorenz system \cite{Lorenz-1963}.
As it is discussed in \cite{LeonovB-1992},
system \eqref{sys:lorenz-general}
can also be used to describe the following physical processes:
the convective fluid motion inside rotating ellipsoid, 
the rotation of rigid body in viscous fluid, 
the gyrostat dynamics, 
the convection of horizontal layer of fluid making harmonic oscillations, 
and the model of Kolmogorov's flow. 

Note that since parameters $\nu_1$, $\nu_2$, $h$ are positive,
the parameters $\sigma$, $b$, $r$ are positive and parameter $a$ is negative.
From relation \eqref{eq:params-relation} we have:
\begin{equation}\label{cond:rabinovich}
  \sigma = -ar.
\end{equation}

Further, we study system \eqref{sys:lorenz-general}
under the assumption \eqref{cond:rabinovich}.
If $r < 1$, then system \eqref{sys:lorenz-general}
has a unique equilibrium ${\bf \rm S_0} = (0,0,0)$, which is
globally asymptotically Lyapunov stable (global attractor)
\cite{LeonovB-1992,BoichenkoLR-2005}.
If $r > 1$, then system \eqref{sys:lorenz-general}
has three equilibria: ${\bf \rm S_0} = (0,0,0)$ and
$
  {\bf \rm S_{\pm}} = (\pm x_1, \, \pm y_1, \, z_1),
$
where
\[
 x_1 = \frac{\sigma b \sqrt{\xi}}{\sigma b + a \xi}, \quad
 y_1 = \sqrt{\xi}, \quad
 z_1 = \frac{\sigma \xi}{\sigma b + a \xi},
\]
and
\[
 \xi = \frac{\sigma b}{2 a^2} \left[ a (r-2) - \sigma + \sqrt{(\sigma - ar)^2 + 4a\sigma} \right].
\]
The stability of equilibria $S_{\pm}$ of system \eqref{sys:lorenz-general} depends on
the parameters $r$, $a$, and $b$.
Using the Routh-Hurwitz criterion, we obtain the following
\begin{lemma}\label{prop:stability}
  The equilibria $S_{\pm}$ of system \eqref{sys:lorenz-general} with
  parameters \eqref{eq:params-relation} are stable
  if and only if one of the following conditions holds:
  \begin{enumerate}[label=(\roman*)]
    \item $0 \leq \, ar + 1 \, < \frac{2 r}{r - \sqrt{r(r-1)}}$, \label{cond:stability:1}
    \item $ar + 1 < 0$, \, $b > b_{\rm cr} =
    \frac{4 a (r - 1) (ar + 1) \sqrt{r(r-1)} + (ar - 1)^3}{(ar + 1)^2 - 4ar^2}.$
    \label{cond:stability:2}
  \end{enumerate}
\end{lemma}
\begin{proof}
  The coefficients of the characteristic polynomial
  $\chi(x,\,y,\,z) = \lambda^3 +  p_1(x,\,y,\,z) \lambda^2 + p_2(x,\,y,\,z) \lambda + p_3(x,\,y,\,z)$
  of the Jacobian matrix of system \eqref{sys:lorenz-general}
  at the point $(x,\,y,\,z)$ are the following
  \begin{align*}
    & p_1(x,\,y,\,z) =  b - ar + 1, \\
    & p_2(x,\,y,\,z) = x^2 + a y^2 - a z^2 - ar (b - r + 1)+ b, \\
    & p_3(x,\,y,\,z) = - a \big(2 x y z + r x^2 - y^2 + b z^2  - b r (r-1)\big).
  \end{align*}
  One can check that inequalities $p_1(x_1,\,y_1,\,z_1) > 0$ and $p_3(x_1,\,y_1,\,z_1) > 0$ are always valid.
  If $ar + 1 \geq 0$, then
  $p_1(x_1,\,y_1,\,z_1) p_2(x_1,\,y_1,\,z_1) - p_3(x_1,\,y_1,\,z_1) > 0$,
  and if condition \ref{cond:stability:1} also holds, then $p_2(x_1,\,y_1,\,z_1) > 0$.

  If $ar + 1 < 0$, then $p_2(x_1,\,y_1,\,z_1) > 0$, and
  if condition \ref{cond:stability:2} also holds, then
  $p_1(x_1,\,y_1,\,z_1) p_2(x_1,\,y_1,\,z_1) - p_3(x_1,\,y_1,\,z_1) > 0$.
\end{proof}

\section{\label{sec:attractor} Attractors and transient chaos}

Consider system \eqref{sys:lorenz-general}
as an autonomous differential equation of a general form:
\begin{equation}\label{sys:ode}
  \dot{u} = f({u}),
\end{equation}
where $u=(x,y,z) \in \mathbb{R}^3$, and the
continuously differentiable vector-function $f: \mathbb{R}^3 \to \mathbb{R}^3$
represents the right-hand side of system \eqref{sys:lorenz-general}.
Define by ${u}(t,{u}_0)$ a solution of \eqref{sys:ode} such that
${u}(0,{u}_0)={u}_0$.
For system \eqref{sys:ode}, a bounded closed invariant set $K$ is
\begin{enumerate}[label=(\roman*)]
  \item a {\it (local)  attractor} if it is a minimal locally attractive set
        (i.e. $\lim_{t \to +\infty} {\rm dist} (K, {u}(t,{u}_0)) = 0$
        $\forall {u_0} \in K(\varepsilon)$, where $K(\varepsilon)$ is a
        certain $\varepsilon$-neighborhood of set $K$),
  \item a {\it global attractor} if it is a minimal globally attractive set
        (i.e. $\lim_{t \to +\infty} {\rm dist} (K, {u}(t,{u}_0)) = 0$
        $\forall {u_0} \in \mathbb{R}^3$),
\end{enumerate}
where ${\rm dist}(K, {u}) = \inf_{{v} \in K} ||{v} - {u}||$
is the distance from the point ${u} \in \mathbb{R}^3$ to the set $K \subset \mathbb{R}^3$ (see, e.g. \cite{LeonovKM-2015-EPJST}).

Note that system \eqref{sys:lorenz-general} (or \eqref{sys:ode})
is dissipative in the sense that it possesses a bounded convex absorbing set
\cite{LeonovB-1992,LeonovKM-2015-EPJST}:
\begin{equation}\label{absorb_set}
  \mathcal{B}(r,a,b) = \left\{u \in \mathbb{R}^3 ~|~ V(u) \leq \frac{b (\sigma + \delta r)^2}{2 c (a + \delta)} \right\},
\end{equation}
where $V(u)=V(x,y,z) = x^2 + \delta y^2 + (a + \delta)\left(z - \frac{\sigma + \delta r}{a + \delta}\right)^2$,
$\delta$ is an arbitrary positive number such that $a + \delta > 0$ and $c = \min(\sigma, 1, \frac{b}{2})$.
Thus, the solutions of \eqref{sys:lorenz-general} exist for $t \in [0,+\infty)$
and system \eqref{sys:lorenz-general}
possesses a global attractor \cite{Chueshov-2002-book,LeonovKM-2015-EPJST},
which contains the set of all equilibria
and can be constructed as
$\cap_{\tau > 0} \overline{\cup_{t \geq \tau} \varphi^t\left(\mathcal{B}\right)}$.

Computational errors (caused by a finite precision arithmetic and numerical integration
of differential equations) and sensitivity to initial data 
allow one to get a reliable visualization of a \emph{chaotic attractor}
by only one pseudo-trajectory computed for a sufficiently large time interval.
One needs to choose an initial point in the basin of attraction of the attractor
and observe how the trajectory, starting from this initial point,
after a transient process visualizes the attractor.
Thus, from a computational point of view, it is natural
to suggest the following classification of attractors,
based on the simplicity of finding the basins of attraction in the phase space.

\begin{definition}{\cite{LeonovKV-2011-PLA,LeonovK-2013-IJBC,LeonovKM-2015-EPJST,Kuznetsov-2016}}
 An attractor is called a \emph{self-excited attractor}
 if its basin of attraction
 intersects with any open neighborhood of an equilibrium,
 otherwise, it is called a \emph{hidden attractor}.
\end{definition}

For a \emph{self-excited\footnote{
The term \emph{self oscillation}
(selbsterregten Schwingungen in German)
can be traced back to the works of Barkhausen and Andronov,
where it was used to describe the generation and maintenance of a periodic motion
in electromechanical models by a source of power that lacks any corresponding periodicity
(e.g., a stable limit cycle in the van der Pol oscillator) \cite{Barkhausen-1935,MandelstamP-1932,AndronovVKh-1937,Jenkins-2013}.
}
attractor} its basin of attraction
is connected with an unstable equilibrium
and, therefore, self-excited attractors
can be localized numerically by the
\emph{standard computational procedure}
in which after a transient process a trajectory,
starting in a neighborhood of an unstable equilibrium,
is attracted to the state of oscillation and then traces it.
Thus, self-excited attractors can be easily visualized
(e.g. the classical Lorenz, R\"{o}ssler, and H\'{e}non  attractors
are self-excited with respect to unstable zero equilibrium
and can be easily visualized by a trajectory from its vicinity).

For a hidden attractor, its basin of attraction is not connected with equilibria
and, thus, the search and visualization of hidden attractors in the phase space may be a challenging task.
Hidden attractors are attractors in the systems without equilibria
(see, e.g. rotating electromechanical systems with Sommerfeld effect (1902)
\cite{Sommerfeld-1902,KiselevaKL-2016-IFAC}),
and in the systems with only one stable equilibrium
(see, e.g. counterexamples \cite{LeonovK-2011-DAN,LeonovK-2013-IJBC}
to Aizerman's (1949) and Kalman's (1957) conjectures
on the monostability of  nonlinear control systems
\cite{Aizerman-1949,Kalman-1957}).
One of the first related problems is the second part
of 16th Hilbert problem \cite{Hilbert-1901}
on the number and mutual disposition of limit cycles
in two dimensional polynomial systems,
where nested limit cycles (a special case of multistability and coexistence of periodic attractors)
exhibit hidden periodic attractors (see, e.g., \cite{Bautin-1939,KuznetsovKL-2013-DEDS,LeonovK-2013-IJBC}).
The \emph{classification of attractors as being hidden or self-excited}
was introduced by Leonov \& Kuznetsov
in connection with the discovery of the first hidden Chua attractor 
\cite{KuznetsovLV-2010-IFAC,LeonovKV-2011-PLA,BraginVKL-2011,LeonovKV-2012-PhysD,KuznetsovKLV-2013,KiselevaKKKLYY-2017,StankevichKLC-2017}
and has captured much attention of scientists from around the world
(see, e.g. \cite{BurkinK-2014-HA,LiSprott-2014-HA,LiZY-2014-HA,PhamRFF-2014-HA,
ChenLYBXW-2015-HA,KuznetsovKMS-2015-HA,SahaSRC-2015-HA,SemenovKASVA-2015,SharmaSPKL-2015-EPJST,ZhusubaliyevMCM-2015-HA,
DancaKC-2016,JafariPGMK-2016-HA,MenacerLC-2016-HA,OjoniyiA-2016-HA,PhamVJVK-2016-HA,RochaM-2016-HA,WeiPKW-2016-HA,Zelinka-2016-HA,
BorahR-2017-HA,BrzeskiWKKP-2017,FengP-2017-HA,JiangLWZ-2016-HA,KuznetsovLYY-2017-CNSNS,MaWJZH-2017,MessiasR-2017-HA,SinghR-2017-HA,VolosPZMV-2017-HA,WeiMSAZ-2017-HA,ZhangWWM-2017-HA}).


Since in the numerical computation of trajectory over a finite-time interval
it is difficult to distinguish a \emph{sustained chaos} from
a \emph{transient chaos}
(a transient chaotic set in the phase space, which can nevertheless persist for a long time)
\cite{GrebogiOY-1983,LaiT-2011},
a similar to the above classification can be introduced for
the transient chaotic sets.

\begin{definition}{\cite{DancaK-2017-CSF,ChenKLM-2017-IJBC}}
 A \emph{transient chaotic set} is called a \emph{hidden transient chaotic set}
 if it does not involve and attract trajectories
 from a small neighborhood of equilibria;
 otherwise, it is called \emph{self-excited}.
\end{definition}

In order to distinguish an attracting chaotic set (attractor)
from a transient chaotic set in numerical experiments,
one can consider a grid of points in a small neighborhood of the set
and check the attraction of corresponding trajectories towards the set.


For system \eqref{sys:rabinovich} with parameters
$\nu_1 = 1$, $\nu_2 = 4$, and increasing $h$
it is possible to observe~\cite{PikovskiRT-1978} the classical scenario of
transition to chaos (via homoclinic and subcritical Andronov-Hopf bifurcations)
similar to the scenario in the Lorenz system.
For $4.84 \lessapprox h \lessapprox 13.4$
in system \eqref{sys:rabinovich} there is a self-excited chaotic attractor
(see e.g. Fig.~\ref{fig:rabinovich:attr:SE}), which coexists with two stable equilibria.
The same scenario can be obtained for system \eqref{sys:lorenz-general}
when parameters $b > 0$ and $a < 0$ are fixed and $r$ is increasing.
Besides self-excited chaotic attractors,
a hidden attractor was found in the system
\cite{KuznetsovLMS-2016-INCAAM,ChenKLM-2017-IJBC}.
Note that in \cite{LeonovKM-2015-CNSNS,LeonovKM-2015-EPJST}
system \eqref{sys:lorenz-general} with $a > 0$ was studied and a hidden attractor
was also found numerically.

In this work we localize a hidden chaotic attractor
in system \eqref{sys:lorenz-general} with $a < 0$
by the numerical continuation method starting from a self-excited chaotic attractor.
We change parameters, considered in \cite{KuznetsovLMS-2016-INCAAM},
in such a way that the chaotic set locates
not too close to the unstable zero equilibrium
to avoid a situation, when numerically integrated trajectory
oscillates for a long time and then
falls on the unstable manifold of unstable zero equilibrium,
leaves the chaotic set, and tends to one of the stable equilibria.

\begin{figure}[h]
 \centering
 \includegraphics[width=0.45\textwidth]{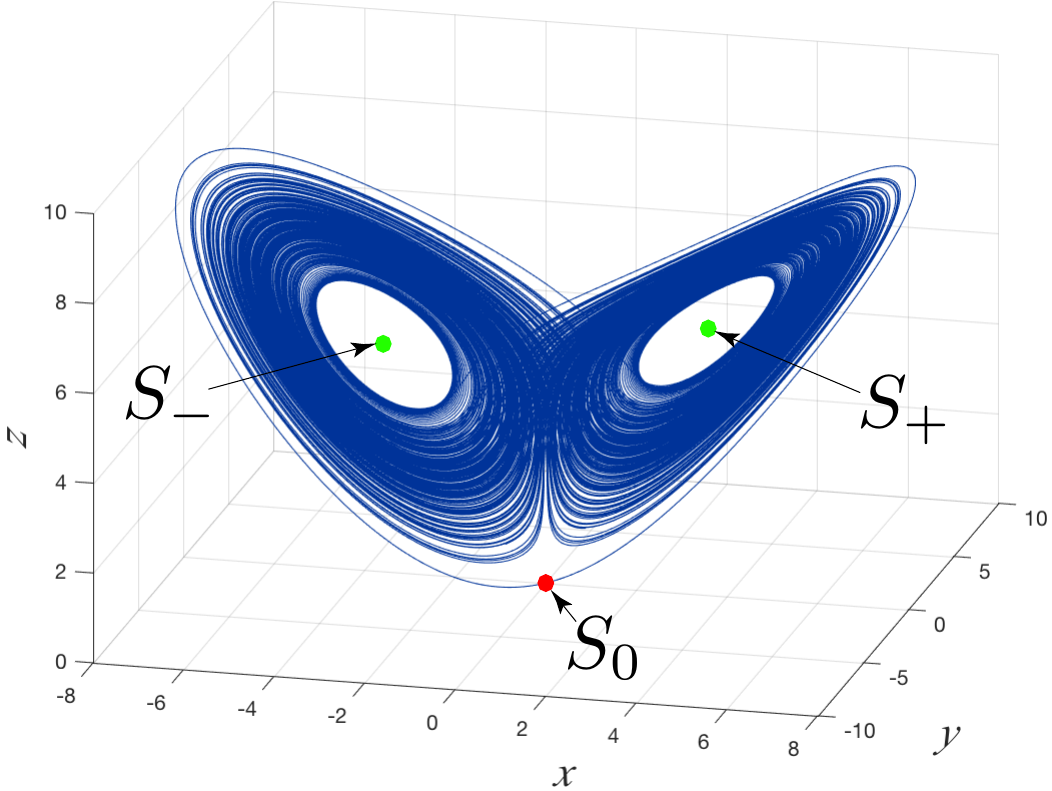}
 \caption{\label{fig:rabinovich:attr:SE}
 Multistability in the Rabinovich system \eqref{sys:lorenz-general}
 with 
 the classical values of parameters
 $\nu_1 = 1$, $\nu_2 = 4$, $h = 4.92$ from \cite{PikovskiRT-1978}:
 coexistence of three local attractors ---
 two stable equilibria $S_{\pm}$  and a chaotic self-excited attractor
 (self-excited with respect to the unstable zero equilibrium $S_0$).}
\end{figure}

\subsection{Localization via numerical continuation method} 

One of the effective methods for numerical localization of hidden attractors
in multidimensional dynamical systems is based on the
{\it homotopy} and {\it numerical continuation method (NCM)}.
The idea is to construct a sequence of
similar systems such that for the first (starting) system
the initial point for numerical computation of oscillating
solution (starting attractor)
can be obtained analytically, e.g,
it is often possible to consider the starting
system with a self-excited starting attractor;
then the transformation of this starting attractor
in the phase space
is tracked numerically while passing from one system to another;
the last system corresponds to the system
in which a hidden attractor is searched.

For the study of the scenario of transition to chaos,
we consider system \eqref{sys:ode} with $f(u) = f(u, \lambda)$,
where $\lambda \in \Lambda \subset \mathbb{R}^d$
is a vector of parameters whose variation in the parameter space $\Lambda$
determines the scenario.
Let $\lambda_{\rm end} \in \Lambda$ define a point corresponding to the system,
where a hidden attractor is searched.
Choose a point $\lambda_{\rm begin} \in \Lambda$ such that
we can analytically or numerically localize a certain nontrivial (oscillating) attractor
$\mathcal{A}^1$ in system \eqref{sys:ode} with $\lambda = \lambda_{\rm begin}$
(e.g., one can consider an initial self-excited attractor, defined by a trajectory ${u}^1(t)$
numerically integrated on a sufficiently large time interval $t \in [0, T]$
with the initial point ${u}^1(0)$ in the vicinity of an unstable equilibrium).
Consider a {\it path}\footnote{
  In the simplest case, when $d = 1$, the path is a line segment.
} in the parameter space $\Lambda$ , i.e. a continuous function
$\gamma~:~ [0,\,1] \to \Lambda$, for which $\gamma(0) = \lambda_{\rm begin}$ and
$\gamma(1) = \lambda_{\rm end}$,
and a sequence of points $\{\lambda^j\}_{j=1}^k$ on the path,
where $\lambda^1 = \lambda_{\rm begin}$,
$\lambda^k = \lambda_{\rm end}$,
such that the distance between
$\lambda^j$ and $\lambda^{j+1}$
is sufficiently small.
On each next step of the procedure,
the initial point for a trajectory to be integrated
is chosen as the last point of the trajectory integrated on the previous step:
${u}^{j+1}(0) = {u}^{j}(T)$.
Following this procedure and sequentially increasing $j$,
two alternatives are possible:
the points of $\mathcal{A}^j$ are in the basin of attraction
of attractor $\mathcal{A}^{j+1}$
or, while passing from system \eqref{sys:ode} with $\lambda = \lambda^j$
to system \eqref{sys:ode} with $\lambda = \lambda^{j+1}$,
a loss of stability bifurcation is observed and attractor $\mathcal{A}^j$ vanishes.
If while changing $\lambda$ from $\lambda_{\rm begin}$ to $\lambda_{\rm end}$
there is no loss of stability bifurcation of the considered attractors,
then a hidden attractor for $\lambda^k = \lambda_{\rm end}$ (at the end of the procedure)
is localized.

\subsection{Localization using perpetual points}

The equilibrium points of a dynamical system are the ones,
where the velocity and acceleration of the system simultaneously become zero.
If the existing equilibrium points are unstable,
then we may get either oscillating
or unbounded solutions.
In this section we show numerical results, which suggest that there are points,
termed as \emph{perpetual points} \cite{Prasad-2015},
which may help to visualize hidden attractors.

For system \eqref{sys:ode}, the equilibrium points ${u}_{\rm ep}$ are defined by
the equation $\dot{u} = f({u}_{\rm ep}) = 0$.
Consider a derivative of system \eqref{sys:ode} with respect to time
\begin{equation}\label{eq:acc}
  \ddot{u} = J({u}) \, f({u}) = g({u}),
\end{equation}
where
$J(u) =
\left[
 \frac{\partial f_i(u)}{\partial u_j}
\right]_{i,j=1}^n$
is the $n \times n$ Jacobian matrix.
Here $g({u})$ may be termed as an \emph{acceleration vector}.
System \eqref{eq:acc} shows the variation of acceleration in the phase space.

Similar to the equilibrium points estimation, where we set the velocity vector to zero,
we can also get a set of points, where $\ddot{u} = g({u}_{\rm pp}) = 0$
in \eqref{eq:acc}, i.e. the points corresponding to the zero acceleration.
At these points the velocity $\dot{u}$ may be either zero or nonzero.
This set includes the equilibrium points ${u}_{\rm ep}$ with zero velocity
as well as a subset of points with nonzero velocity.
These nonzero velocity points ${u}_{\rm pp}$ are termed as {\it perpetual points}
\cite{Prasad-2015,DudkowskiPK-2015-HA,Prasad-2016,DudkowskiJKKLP-2016}.
The reason why perpetual points may lead to hidden states ({\it perpetual point method (PPM)})
is still not well understood 
(see, e.g. discussion in \cite{NazarimehrSJS-2017}).

\begin{lemma}\label{lemma:perpetual}
  Perpetual points $S_{\rm pp} = (x_{\rm pp}, y_{\rm pp}, z_{\rm pp})$
  of system \eqref{sys:lorenz-general} can be derived from the following
  system
  \begin{equation}
  \left\{
  	\begin{aligned}
  		& (x_{\rm pp}^2 - a z_{\rm pp}^2 + a r^2 - 1 )
  		\left(-y_{\rm pp}^2 + z_{\rm pp}^2 + r^2 (a - 1) \right) =  \\
  		& \qquad\qquad\qquad\quad = r^2 (a r - 1)^2 - z_{\rm pp}^2 (a r - b - 1)^2, \\
  		& 2 x_{\rm pp} y_{\rm pp} z_{\rm pp} + r x_{\rm pp}^2 - y_{\rm pp}^2
  		+ b z_{\rm pp}^2 - b r ( r - 1 ) = 0, \\
  		& z_{\rm pp} = \frac{r x_{\rm pp}^2
  		 + (a r - b - 1) x_{\rm pp} y_{\rm pp} - a r y_{\rm pp}^2}{x_{\rm pp}^2 + a y_{\rm pp}^2 - b^2}.
  	\end{aligned}
  \right.
  \end{equation}
\end{lemma}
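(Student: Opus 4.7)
The plan is to return to the defining condition for perpetual points, namely the nontrivial ($f \ne 0$) solutions of $g({\rm x},\Lambda) = J({\rm x})\,f({\rm x}) = 0$, write this vector equation componentwise for system~\eqref{sys:lorenz-general}, substitute the Rabinovich constraint $\sigma = -ar$ from~\eqref{cond:rabinovich}, and then massage the three resulting scalar equations into the form stated in the lemma.

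I would begin by writing down the Jacobian $J$ of~\eqref{sys:lorenz-general} and computing each component of $g = J f$ as a polynomial in $x, y, z$. The substitution $\sigma = -ar$ produces two convenient cancellations: the coefficient of $xz$ in $(Jf)_1$ is $\sigma + ar = 0$, and the coefficient of $yz$ in $(Jf)_2$ is $-(ar + \sigma) = 0$. The third component is linear in $z$, so dividing by its coefficient $b^{2} - x^{2} - a y^{2}$ immediately produces the third equation of the lemma. The second equation follows from a structural observation: at any perpetual point $f \ne 0$ lies in $\ker J$, hence $\det J = 0$. Computing $\det J$ directly, or reading it off from $p_3 = -\det J$ already tabulated in the proof of Lemma~\ref{prop:stability}, and applying $\sigma = -ar$, yields $\det J = a\bigl(2xyz + rx^{2} - y^{2} + bz^{2} - br(r-1)\bigr)$, so dividing by $a \ne 0$ gives the second equation.

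The main step is the first equation. After substituting $\sigma = -ar$ and dividing $(Jf)_1$ by $a$, both of the remaining equations can be collected into the compact form
\[
  x\,\alpha - y\,\beta = 0, \qquad x\,\gamma - y\,\delta = 0,
\]
with
\[
  \alpha = -y^{2} + z^{2} + r^{2}(a - 1), \qquad \delta = x^{2} - a z^{2} + a r^{2} - 1,
\]
\[
  \beta = r(ar - 1) + (ar - b - 1)\,z, \qquad \gamma = r(ar - 1) - (ar - b - 1)\,z.
\]
Viewing these as a linear system in $(x, y)$, the existence of a nontrivial solution forces the determinant $\alpha\delta - \beta\gamma$ to vanish. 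Since $\beta\gamma = r^{2}(ar - 1)^{2} - z^{2}(ar - b - 1)^{2}$, the identity $\alpha\delta = \beta\gamma$ is precisely the first equation of the lemma.

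The hard part is purely computational: one must carefully expand $g = Jf$, impose $\sigma = -ar$ to clear the obstructing terms, and \emph{recognize} the factorizations $x\alpha - y\beta$ and $x\gamma - y\delta$ hidden inside $(Jf)_1/a$ and $(Jf)_2$. Once that structure is spotted, each of the three equations in the lemma drops out in a single line, so the only real risk is algebraic bookkeeping during the initial expansion.
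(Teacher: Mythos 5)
Your derivation is correct, and there is in fact nothing in the paper to compare it against: the lemma is stated with no proof whatsoever, so your argument supplies the missing justification rather than deviating from one. All three reductions check out. With $\sigma=-ar$ from \eqref{cond:rabinovich}, the $xz$-terms in $(Jf)_1$ and the $yz$-terms in $(Jf)_2$ do cancel; $(Jf)_3 = rx^2 - ary^2 + (ar-b-1)xy - (x^2+ay^2-b^2)z$ is linear in $z$ and yields the third equation; your second equation is exactly $-p_3/a=0$ with $p_3=-\det J$ as tabulated in the proof of the stability lemma, and $\det J=0$ is indeed forced by $Jf=0$ with $f\neq0$; and the factorizations $(Jf)_1/a=x\alpha-y\beta$ and $(Jf)_2=x\gamma-y\delta$ hold with your $\alpha,\beta,\gamma,\delta$, so $\beta\gamma=r^2(ar-1)^2-z^2(ar-b-1)^2$ and the vanishing of the $2\times2$ determinant reproduces the first equation verbatim.

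Two points deserve an explicit sentence. First, the Cramer-type step requires $(x,y)\neq(0,0)$ at the point in question; this is harmless, since $x=y=0$ gives $(Jf)_3=b^2z$, forcing $z=0$, i.e.\ the origin, which is a fixed point rather than a perpetual point. Second, your three equations are only necessary conditions: the determinant relation is strictly weaker than the pair of proportionality relations it replaces, and $\det J=0$ is a consequence rather than an equivalent reformulation, so the solution set of the displayed system is generally a superset of the perpetual points (it contains the equilibria, for instance). That is consistent with the lemma's wording that perpetual points ``can be derived from'' the system --- one solves it and then filters by $f\neq0$ and $Jf=0$ --- but it is worth stating so the reader does not read the system as an exact characterization.
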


\section{\label{sec:attractor:hidden} Hidden attractor in \\ the Rabinovich system}

Next we apply the NCM for
localization of a hidden attractor in the Rabinovich system \eqref{sys:lorenz-general} and
check whether the attractor can be also localized using PPM.

In this experiment, we fix parameter $r$ and, using condition \ref{cond:stability:2}
of Lemma~\ref{prop:stability}, define parameters $a = - \frac{1}{r} - \varepsilon_1$
and $b = b_{\rm cr} - \varepsilon_2$.
For $r = 100$, $\varepsilon_1 = 10^{-3}$, $\varepsilon_2 = 10^{-2}$
we obtain $a = a_0 \equiv -1.1 \cdot 10^{-2}$, $b = b_0 \equiv 6.7454 \cdot 10^{-2}$ and
take $P_0(a_0, \, b_0)$ as the initial point
of line segment on the plane $(a, \,b)$.
The eigenvalues of the Jacobian matrix at the
equilibria $S_0$, $S_\pm$ of system~\eqref{sys:lorenz-general}
for these parameters are the following:
\begin{align*}
S_0 \, & : \quad 9.4382, \quad -0.0675, \quad -11.5382, \\
S_{\pm} \,  & : \quad 0.0037 \pm 3.6756 \, \mi, \quad -2.1749.
\end{align*}

Consider on the plane $(a,\, b)$ a line segment,
intersecting a boundary of stability domain of the equilibria $S_{\pm}$,
with the final point $P_2(a_2, \, b_2)$,
where $a_2 = a_0 + 1.035~\cdot~10^{-3} = -9.965 \cdot 10^{-3}$,
$b_2 = b_0 + \varepsilon_2 = 7.7454 \cdot 10^{-2}$,
i.e. the equilibrium $S_0$ remains saddle and the equilibria $S_\pm$
become stable focus-nodes
\begin{align*}
S_0 \, & : \quad 8.9842, \quad -0.0775, \quad -10.9807, \\
S_{\pm} \,  & : \quad -0.0401 \pm 3.9152 \, \mi, \quad -1.9937.
\end{align*}
\begin{figure}[!h]
	\centering
	\includegraphics[width=0.45\textwidth]{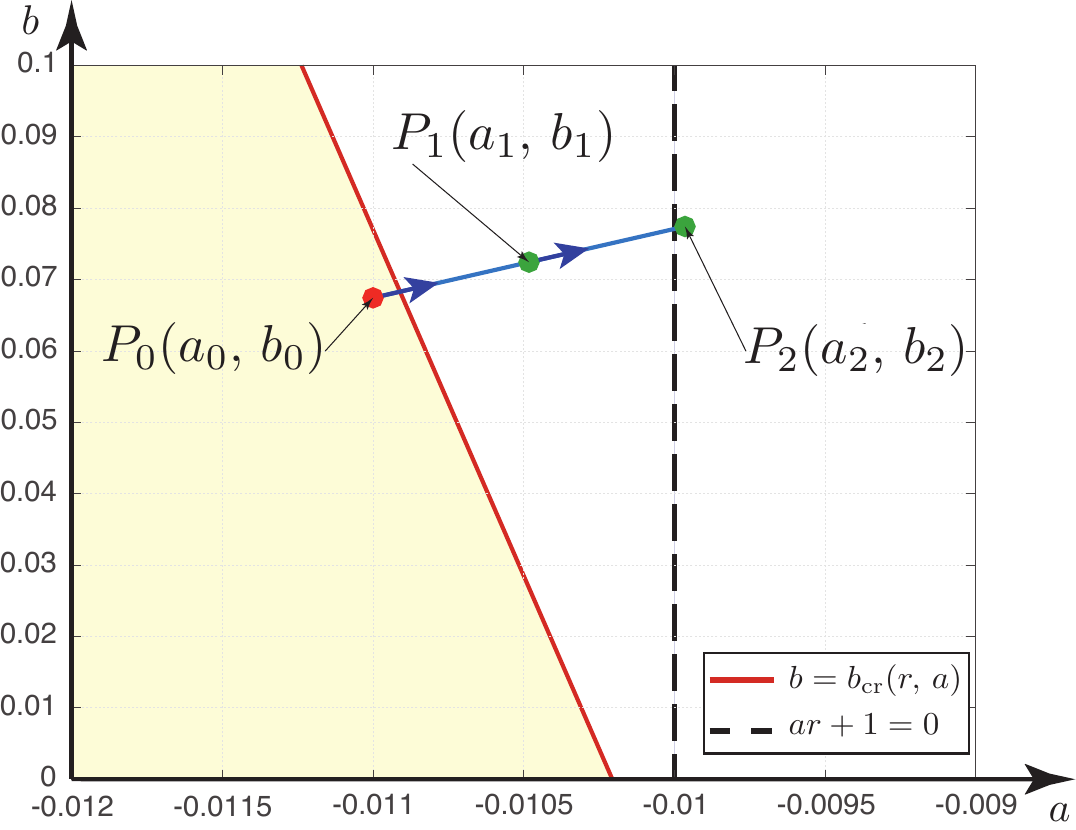}
	\caption{\label{fig:rabinovich:path}
  Path $P_0(a_0, \, b_0) \to P_1(a_1, \, b_1) \to P_2(a_2, \, b_2)$
  in parameters plane $(a,\,b)$ for the localization of hidden attractor in
  system \eqref{sys:lorenz-general} with $r = 100$.
  Here $a_0 = -1.1 \cdot 10^{-2}$, $b_0 = 6.7454 \cdot 10^{-2}$,
  $a_1 = -1.049 \cdot 10^{-2}$, $b_1 = 7.2454 \cdot 10^{-2}$,
  $a_2 = -9.965 \cdot 10^{-3}$, $b_2 = 7.7454 \cdot 10^{-3}$;
	$(\bullet)$ $P_0(a_0, \, b_0)$: self-excited attractor with respect to $S_{0}$, $S_{\pm}$;
  $(\bullet)$ $P_1(a_1, \, b_1)$: self-excited attractor with respect to $S_{0}$;
  $(\bullet)$ $P_2(a_2, \, b_2)$: hidden attractor.
  Stability domain is defined using Lemma~\ref{prop:stability}.
  }
\end{figure}

The initial point $P_0(a_0, \, b_0)$
corresponds to parameters
for which in system \eqref{sys:lorenz-general}
there exists a self-excited attractor.
Then for the considered line segment a sufficiently small partition step
is chosen and at each iteration step of the procedure
an attractor in the phase space of system \eqref{sys:lorenz-general}
is computed.
The last computed point at each step is used as the initial point
for the computation at the next step.
In this experiment we use NCM with $3$ steps on the path
$P_0(a_0, \, b_0) \to P_1 (a_1, \, b_1)\to P_2 (a_2, \, b_2)$, with
$a_1 = \frac{1}{2}(a_0 + a_2)$, $b_1 = \frac{1}{2}(b_0 + b_2)$
(see Fig.~\ref{fig:rabinovich:path}).
At the first step we have self-excited attractor
with respect to unstable equilibria $S_0$ and $S_\pm$;
at the second step the equilibria $S_\pm$
become stable but the attractor remains self-excited with respect to equilibrium $S_0$;
at the third step it is possible to visualize a hidden attractor of
system \eqref{sys:lorenz-general} (see Fig.~\ref{fig:rabinovich:attr:hidden}).

Using Lemma~\ref{lemma:perpetual} for parameters $r = 100$, $a = -9.965 \cdot 10^{-3}$,
$b = 7.7454 \cdot 10^{-2}$,
we obtain one perpetual point $S_{\rm pp} = (-0.2385, 49.1403, -101.4613)$,
which allows one to localize a hidden attractor (see Fig.~\ref{fig:rabinovich:attr:hidden:pp}).
Hence, here both NCM and PPM allow one to find this hidden attractor.

Around equilibrium $S_0$ we choose a small spherical vicinity
of radius $\delta$ (in our experiments we check $\delta \in [0.1,0.5]$)
and take $N$ random initial points on it (in our experiment $N = 4000$).
Using MATLAB, we integrate system \eqref{sys:lorenz-general} with these
initial points in order to explore the obtained trajectories.
We repeat this procedure several times in order to
get different initial points for trajectories on the sphere.
We get the following results: all the obtained trajectories
either attract to the stable equilibrium $S_+$
or the equilibrium $S_-$
and do not tend to the attractor.
This gives us a reason to classify the chaotic attractor, obtained in system
\eqref{sys:lorenz-general}, as the hidden one.

Remark that there exist hidden chaotic sets in the Rabinovich system,
which cannot be localized by PPM. For example,
for  parameters $r = 6.8$, $a = -0.5$, $b \in [0.99,\, 1]$
\cite{KuznetsovLMS-2016-INCAAM}, the hidden attractor obtained by NCM
is not localizable via PPM (see Fig.~\ref{fig:rabinovich:attr:hidden:no-pp}).

\begin{figure*}[ht]
 \centering
 \includegraphics[width=0.95\textwidth]{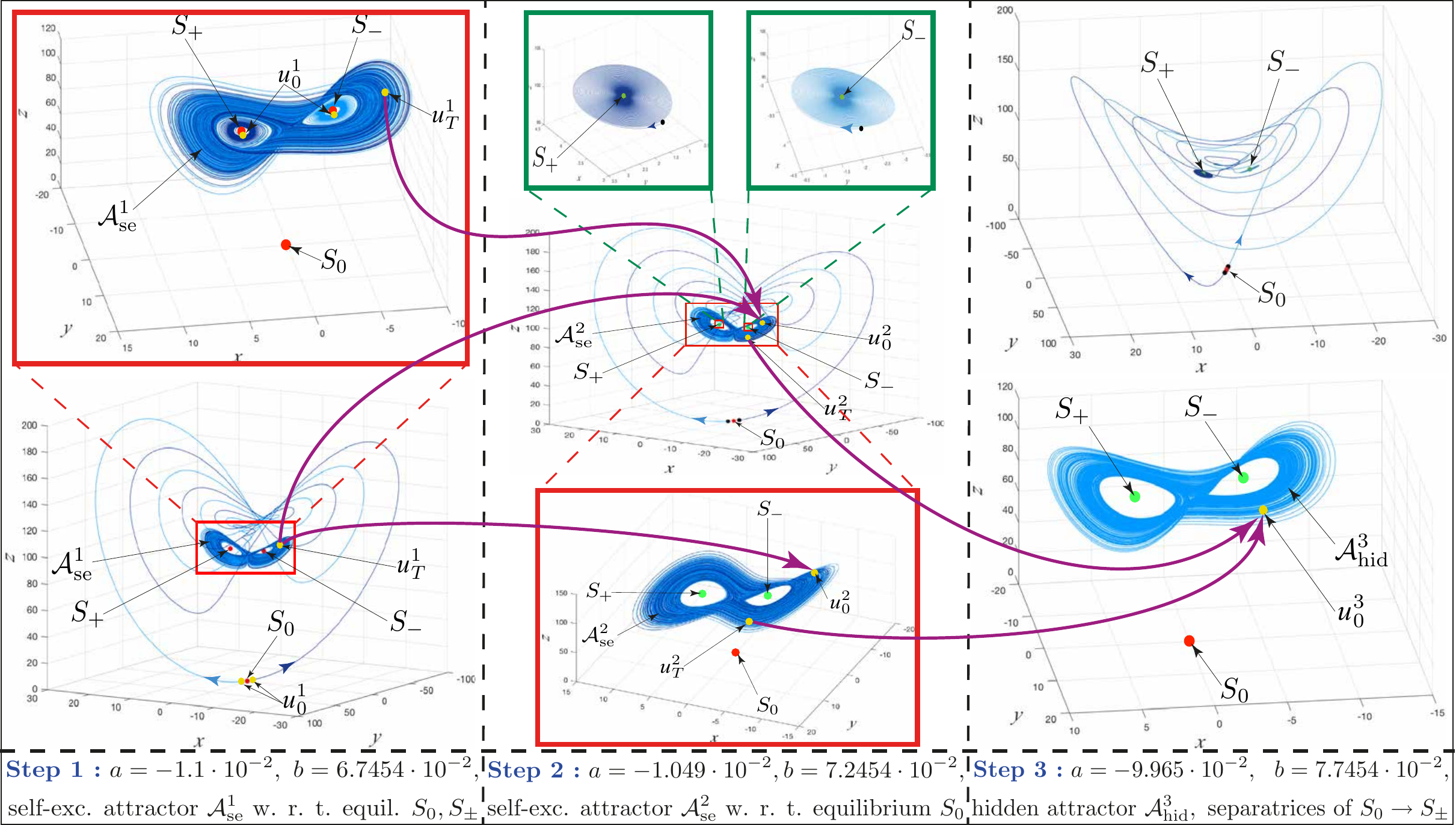}
 \caption{\label{fig:rabinovich:attr:hidden}
 Localization, by NCM, of a hidden attractor in system \eqref{sys:lorenz-general}
 with $r = 100$, $a = -9.965 \cdot 10^{-3}$, $b = 7.7454 \cdot 10^{-2}$.
 Trajectories $u^i(t) = (x^i(t), y^i(t), z^i(t)$ (blue) are defined on the time interval
 $[0, T]$, $T = 10^3$ and initial point (yellow)
 on $(i+1)$-th iteration is defined as $u_0^{i+1} := u_T^{i}$ (violet arrows),
 where $u_T^{i} = u^{i}(T)$ is a final point (yellow).}
\end{figure*}

\begin{figure}[ht]
 \centering
 \includegraphics[width=0.49\textwidth]{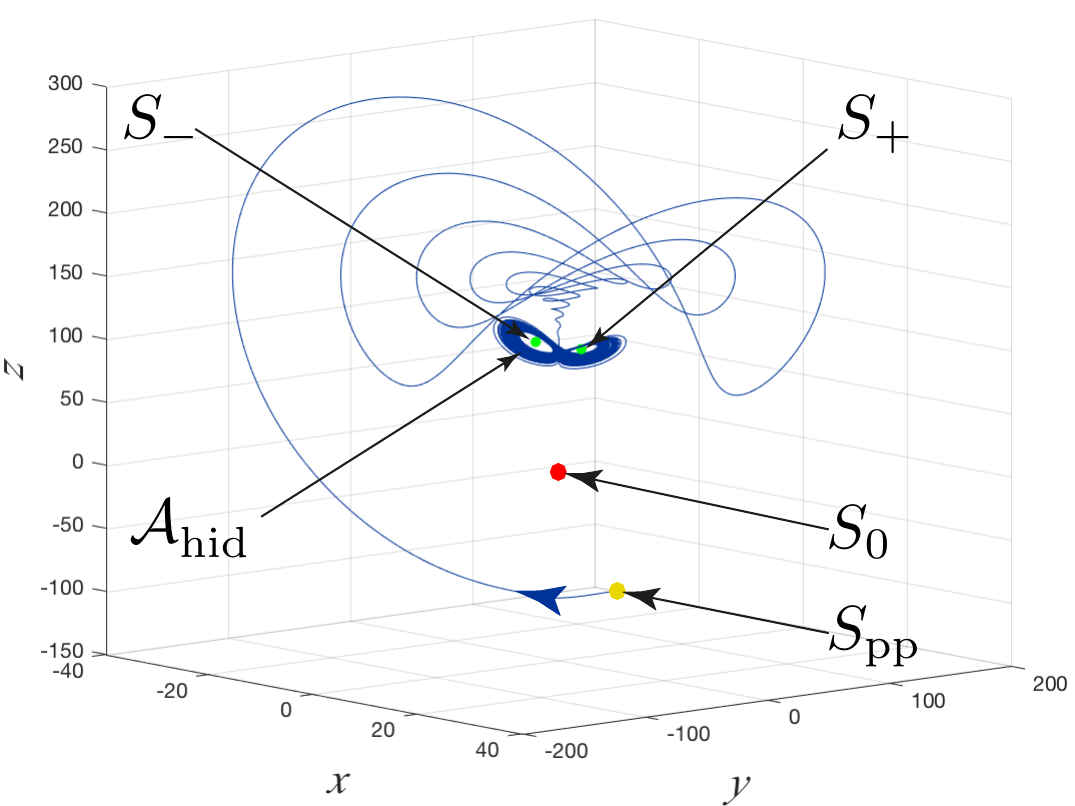}
 \caption{\label{fig:rabinovich:attr:hidden:pp}
 Localization of hidden attractor in system \eqref{sys:lorenz-general}
 with $r = 100$, $a = -9.965 \cdot 10^{-3}$, $b = 7.7454 \cdot 10^{-2}$
 from the perpetual point $S_{\rm pp} = (-0.2385, 49.1403, -101.4613)$.}
\end{figure}

\begin{figure}[ht]
 \centering
 \includegraphics[width=0.49\textwidth]{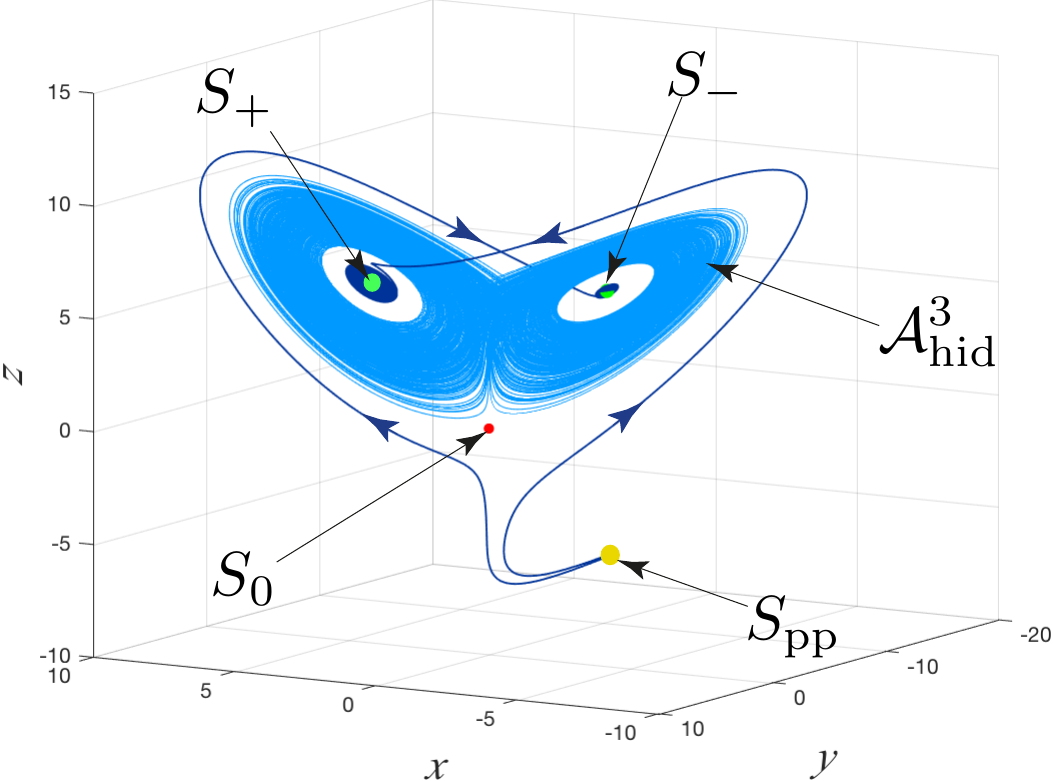}
 \medskip
 \caption{\label{fig:rabinovich:attr:hidden:no-pp}
 Hidden attractor in system \eqref{sys:lorenz-general} with
 $r = 6.8$, $a = -0.5$, $b = 0.99$ which can be localized via NCM
 (initial point ${u}_0 = (-0.1629,\, -0.2154, \, 1.9553)$),
 but cannot be localized from the perpetual point
 $S_{\rm pp} = (0.7431, -12.6109, -7.4424)$ (yellow).}
\end{figure}

\section{Computation of the Lyapunov dimension} \label{sec:lyapunov_dim}

\subsection{The Lyapunov dimension and Lyapunov exponents: finite-time and limit values.}

For the study of attractors
the Lyapunov exponents \cite{Lyapunov-1892} and Lyapunov dimension \cite{KaplanY-1979}
are found useful and have become widely spread
(see, e.g. \cite{GrassbergerP-1983,EckmannR-1985,ConstantinFT-1985,AbarbanelBST-1993,BoichenkoLR-2005}).
Since in numerical experiments we can consider only finite time,
in this paper we develop the concept of the \emph{finite-time Lyapunov dimension} \cite{Kuznetsov-2016-PLA}
and an approach to its reliable numerical computation.

Nowadays, various approaches to the Lyapunov dimension definition are used.
Here, we follow the definition
of \emph{finite-time Lyapunov dimension} from \cite{Kuznetsov-2016-PLA}
inspirited by the works of Douady \& Oesterl\'{e} \cite{DouadyO-1980}, Hunt~\cite{Hunt-1996},
and Rabinovich~et~al.~\cite{RabinovichEW-2000}.


Define by $u(t,u_0)$ a solution of system \eqref{sys:ode} such that $u(0,u_0)=u_0$,
and consider a map given by the evolutionary operator $\varphi^t(u_0) = u(t,u_0)$
(shift operator along a solution of \eqref{sys:ode}).
Since system \eqref{sys:ode} poses an absorbing set (see \eqref{absorb_set}),
the uniqueness and existence of solutions of system \eqref{sys:ode}
for $t \in [0,+\infty)$ take place and the system generates a \emph{dynamical system}
$\{\varphi^t\}_{t\geq0}$.
Let a nonempty closed bounded set $K \subset \mathbb{R}^3$
be invariant with respect to dynamical system generated by \eqref{sys:ode}
$\{\varphi^t\}_{t\geq0}$, i.e. $\varphi^t(K) = K$ for all $t \geq 0$
(e.g. $K$ is an attractor).
Further we use compact notations for
\emph{finite-time local Lyapunov dimension}:
$\dim_{\rm L}(t,u) = \dim_{\rm L}(\varphi^t, u)$,
the \emph{finite-time Lyapunov dimension}:
$\dim_{\rm L}(t,K) = \dim_{\rm L}(\varphi^t, K)$,
and for the \emph{Lyapunov dimension}
(or the Lyapunov dimension of dynamical system $\{\varphi^t\}_{t\geq0}$
with respect to $K$):
$\dim_{\rm L}K = \dim_{\rm L}(\{\varphi^t\}_{t \geq0}, K)$.

Consider linearization of system \eqref{sys:ode}
along the solution $\varphi^t(u)$:
\begin{equation} \label{sfl}
  \begin{aligned}
    & \dot v = J(\varphi^t(u))v,  \quad J(u) = Df(u),
  \end{aligned}
\end{equation}
where $J(u)$ is the $3\!\times\!3$ Jacobian matrix,
the elements of which are continuous functions of $u$.
Suppose that $\det J(u) \neq 0 \quad \forall u \in \mathbb{R}^3$.
Consider a fundamental matrix of solutions of linear system \eqref{sfl},
$D\varphi^t(u)$, such that $D\varphi^0(u) = I$, where $I$
is a unit $3\!\times\!3$ matrix.
Let $\sigma_i(t,u) = \sigma_i(D\varphi^t(u))$, $i = 1,2,3$,
be the singular values of $D\varphi^t(u)$
(i.e. $\sigma_i (t, u) > 0$ and ${\sigma_i (t, u)}^2$ are
the eigenvalues of the symmetric matrix $D\varphi^t(u)^*D\varphi^t(u)$
with respect to their algebraic multiplicity)\footnote{
Symbol $^*$ denotes the transposition of matrix.
},
ordered so that $\sigma_1(t,u) \geq \sigma_2(t,u) \geq \sigma_3(t,u) > 0$
for any $u$ and $t$.
A \emph{singular value function of order} $d \in [0,3]$ is defined as
\[
\begin{aligned}
  & \omega_d(D\varphi^t(u)) =
  \sigma_1(t, u)\cdots\sigma_{\lfloor d \rfloor}(t, u)
    \sigma_{\lfloor d \rfloor+1}(t, u)^{d-\lfloor d \rfloor},
  \\ &
  \omega_0(D\varphi^t(u)) = 1,
  \omega_3(D\varphi^t(u)) = \sigma_1(t, u)\sigma_2(t, u)\sigma_3(t, u),
\end{aligned}
\]
where ${\lfloor d \rfloor}$ is the largest integer less or equal to $d$.
For a certain moment of time $t$
\emph{finite-time local Lyapunov dimension} at the point $u$
is defined as \cite{Kuznetsov-2016-PLA}
\begin{equation}\label{locDOmaptmax}
  \dim_{\rm L}(t,u) = \max\{d \in [0,3]: \omega_{d}(D\varphi^t(u)) \geq 1 \}
\end{equation}
and the \emph{finite-time Lyapunov dimension}  of $K$
is defined as
\begin{equation}\label{DOmaptmax}
  \dim_{\rm L}(t, K) = \sup\limits_{u \in K} \dim_{\rm L}(t,u).
\end{equation}

The \emph{Douady--Oesterl\'{e} theorem} \cite{DouadyO-1980} implies that
for any fixed $t > 0$
the Lyapunov dimension of the map $\varphi^t$ with respect
to a closed bounded invariant set $K$, defined by \eqref{DOmaptmax},
is an upper estimate of the Hausdorff dimension of the set $K$:
$\dim_{\rm H}K \leq \dim_{\rm L}(t, K)$.

For the estimation of the Hausdorff dimension of invariant closed bounded set $K$
one can use the map $\varphi^t$ with any time $t$
(e.g. $t=0$ leads to the trivial estimate $\dim_{\rm H}K \leq 3$)
and, thus, the best estimation is
\(
  \dim_{\rm H}{K} \le \inf_{t\geq0}\dim_{\rm L} (t, K).
\)
The following property
\begin{equation}\label{DOlim}
  \inf_{t\geq0}\sup\limits_{u \in K} \dim_{\rm L}(t,u)
  = \liminf_{t \to +\infty}\sup\limits_{u \in K} \dim_{\rm L}(t,u)
\end{equation}
allows one to introduce the \emph{Lyapunov dimension}  of $K$
as \cite{Kuznetsov-2016-PLA}
\begin{equation}\label{DOinf}
  \dim_{\rm L} K 
  = \liminf_{t \to +\infty}\sup\limits_{u \in K} \dim_{\rm L}(t,u)
\end{equation}
and get an upper estimation of the Hausdorff dimension:
\[
  \dim_{\rm H}{K} \le \dim_{\rm L} K.
\]
Recall that a set with noninteger Hausdorff dimension
is referred to as a \emph{fractal set} \cite{EckmannR-1985}.

Consider a set of \emph{finite-time Lyapunov exponents}
at the point $u$:
\begin{equation}\label{ftLE}
  \LEs_i(t,u) = \frac{1}{t}\ln\sigma_i(t,u), \ t > 0, \quad i=1,2,3.
\end{equation}
Here the set $\{\LEs_i(t,u)\}_{i=1}^3$ is ordered by decreasing
(i.e. $\LEs_1(t,u) \geq \LEs_2(t,u) \geq \LEs_3(t,u)$ for all $t>0$).
Then for $j(t, u) = \lfloor \dim_{\rm L} (t, u) \rfloor <3$
and $s(t, u) = \dim_{\rm L} (t, u) - \lfloor \dim_{\rm L} (t, u) \rfloor$
 we have
\(
  0 = \frac{1}{t} \ln(\omega_{j(t,u)+s(t,u)}(D\varphi^t(u)))
    = \sum_{i=1}^{j(t,u)}\LEs_i(t,u) + s(t, u)\LEs_{j(t,u)+1}(t,u)
\)
and $j(t,u) = \max\{m: \sum_{i=1}^{m}\LEs_i(t,u) \geq 0\}$.
Thus, we get an analog of the \emph{Kaplan-Yorke formula} \cite{KaplanY-1979} with respect
to the set of finite-time Lyapunov exponents $\{\LEs_i(t,u)\}_{i=1}^3$ \cite{Kuznetsov-2016-PLA}:
\begin{multline}\label{lftKY}
   \!\!\!d_{\rm L}^{\rm KY}\!(\!\{\!\LEs_i(t,\!u)\!\}_{i=1}^3\!)\!=\!
   j(t,u)+\tfrac{\LEs_1(t,u) + \cdot\cdot + \LEs_{j(t,u)}(t,u)
   }{|\LEs_{j(t,u)\!+\!1}(t,u)|}
\end{multline}
which gives the finite-time local Lyapunov dimension:
\[
   \dim_{\rm L}(t, u) = d_{\rm L}^{\rm KY}(\{\LEs_i(t,u)\}_{i=1}^3).
\]
Thus, in the above approach the use of Kaplan-Yorke formula \eqref{lftKY} with
the finite-time Lyapunov exponents $\{\LEs_i(t,u)\}_{i=1}^3$
is rigorously justified by the Douady--Oesterl\'{e} theorem.

Note that the finite-time local Lyapunov dimension
is \emph{invariant under time scaling}:
$t \to at, a>0$ (e.g., it follows from \eqref{ftLE} and \eqref{lftKY}),
and the Lyapunov dimension is \emph{invariant under Lipschitz diffeomorphisms} \cite{KuznetsovAL-2016,Kuznetsov-2016-PLA}, i.e.
if the dynamical system $\{\varphi^t\}_{t\geq0}$ and
its closed bounded invariant set $K$
under a smooth change of coordinates $w = \chi(u)$
are transformed to the dynamical system
$\{\varphi_\chi^t\}_{t\geq0}$
and its closed bounded invariant set $\chi(K)$, respectively,
then
\(
  \dim_{\rm L}(\{\varphi^t\}_{t\geq0},K)
  =
  \dim_{\rm L}(\{\varphi_\chi^t\}_{t\geq0},\chi(K)).
\)


\subsection{Algorithm for numerical computation \\ of the finite-time Lyapunov dimension}

Applying the statistical physics approach and assuming the ergodicity
(see, e.g. \cite{KaplanY-1979,Ledrappier-1981,FredericksonKYY-1983,FarmerOY-1983}),
the Lyapunov dimension
of attractor $\dim_{\rm L} K$ 
is often estimated by the local Lyapunov dimension $\dim_{\rm L} (t, u_0)$,
corresponding to a
``typical'' trajectory, which belongs to the attractor:
$\{u(t,u_0), t \geq 0 \},\ u_0 \in K$,
and its limit value $\lim_{t\to+\infty}\dim_{\rm L} (t, u_0)$.
However, from a practical point of view,
the rigorous proof of ergodicity is a challenging task
\cite{BogoliubovK-1937,DellnitzJ-2002,Oseledec-1968,Ledrappier-1981}
and hardly it can effectively be done in a general case
(see, e.g. discussions in \cite{BarreiraS-2000}\cite[p.118]{ChaosBook}\cite{OttY-2008}\cite[p.9]{Young-2013}
\cite[p.19]{PikovskyP-2016}, and the works \cite{KuznetsovL-2005,LeonovK-2007}
on the \emph{Perron effects of the largest Lyapunov exponent sign reversals}).
An example of the rigorous use of the ergodic theory
for effective estimation of the Lyapunov dimensions can be found, e.g. in \cite{Schmeling-1998}.
In one of the pioneering works by Yorke~et~al.~\cite[p.190]{FredericksonKYY-1983}
the \emph{exact} limit values of finite-time Lyapunov exponents\footnote{
 \emph{The Lyapunov exponents} or LEs (see, e.g. \cite{Oseledec-1968})
characterize the rates of exponential growth of
the singular values of fundamental matrix of the linearized system.
The singular values correspond to the semiaxes of
n-dimensional ellipsoid, which is the image of the unit sphere by the linearized system.
See \cite{VallejoS-2017} for various related notions.
}
$\{\lim\limits_{t\to+\infty}\LEs_i(t,u)\}_{i}^3 = \{\LEs_i(u)\}$,
if they exist and are the same for all $u \in K$
$\big($i.e.
$ \{\LEs_i(u)\}_{i}^3 \equiv \{\LEs_i\}_{i}^3$ $\forall u \in K$
and
$\dim_{\rm L} K = d_{\rm L}^{\rm KY}(\{\LEs_i\}_{1}^3)
= j + \tfrac{\LEs_1 + \cdots + \LEs_{j}}{|\LEs_{j+1}|}$$\big)$,
are called the \emph{absolute} ones,
and it is noted that the \emph{absolute Lyapunov exponents} \emph{rarely exist}.
Note also that even if
a numerical approximation (visualization) $\widetilde{K}$ of the attractor $K$
is obtained, it is not straightforward how to get a point
on the attractor itself: $u \in K$.

Thus, a rather easy way to get reliable estimation of the Lyapunov dimension of attractor $K$
is to localize the attractor $K \subset K^{\varepsilon}$, to consider a grid of points $K^{\varepsilon}_{\rm grid}$ on $K^{\varepsilon}$, and
to find the maximum of the corresponding finite-time local Lyapunov dimensions for a certain time $t=T$:
$\max\limits_{u \in K^{\varepsilon}_{\rm grid}} \dim_{\rm L}(\varphi^T,u)
=\max\limits_{u \in K^{\varepsilon}_{\rm grid}} j(T,u) + \frac{\LEs_1(T,u) + \cdots +
\LEs_{j(T,u)}(T,u)}{|\LEs_{j(T,u)+1}(T,u)|}$.

Concerning $T$, remark that while the time series obtained from a \emph{physical experiment}
are assumed to be reliable on the whole considered time interval,
the time series, produced by the integration of \emph{mathematical dynamical model},
can be reliable on a limited time interval only\footnote{
In \cite{KehletL-2013,KehletL-2015} for the Lorenz system
the time interval of reliable computation
with 16~significant digits and error $10^{-4}$
is estimated
as $[0, 36]$, with error $10^{-8}$
is estimated as $[0, 26]$,
and reliable computation
for a longer time interval, e.g. $[0,10000]$ in \cite{LiaoW-2014}, is a challenging task.
Also if $u_0$ belongs to a \emph{transient chaotic set},
then $u(t,u_0)$ may have positive finite-time Lyapunov exponent
on a very large time interval, e.g. $[0,15000]$, but
finally $u(t,u_0)$ converges to a stable stationary point
as $t \to \infty$ and has nonpositive limit Lyapunov exponents.
}
due to computational errors,
and the closeness of the real trajectory
and the corresponding pseudo-trajectory calculated numerically
can be guaranteed on a limited short time interval only.
The computation of a pseudo-trajectory $\tilde u(t,u_0)$
on a longer time interval $t \in [0,T]$
often allows one to obtain a more complete visualization of
a chaotic attractor (pseudo-attractor)
due to computational errors
(caused by finite precision arithmetic and numerical integration of ODE)
and sensitivity to initial data.
However, 
for two long-time pseudo-trajectories
$\tilde u(t,u^1_0)$ and $\tilde u(t,u^2_0)$
the corresponding finite-time LEs can be, within the considered error,
similar due to averaging over time (see \eqref{ftLE})
and similar sets of points obtained
$\{\tilde u(t,u^1_0)\}_{t \geq0}$ and $\{\tilde u(t,u^2_0)\}_{t \geq0}$.
At the same time, the corresponding real trajectories $u(t,u^{1,2}_0)$
may have different LEs
(e.g. $u_0$ may correspond to an unstable periodic trajectory $u(t,u_0)$,
which is embedded in the attractor and does not allow one to visualize it).
Here one may recall 
the conjecture that the maximum of the local Lyapunov dimension
is achieved on a periodic orbit or a stationary point \cite[p.98]{Eden-1989-PhD}.
Also, if the trajectory belongs to a transient chaotic set, which can be (almost) indistinguishable
numerically from sustained chaos, then even very long-time computation may not reveal
the limit values of LEs
(see Figs.~\ref{fig:rab:LCE1:chaotic} and \ref{fig:rab:LCE1:transient}).

Thus, in general, the computation for a longer time does not imply a more precise
approximation of LEs.
Note, that there is no rigorous justification of the choice of $t$
and it is known that unexpected jumps
of $\dim_{\rm L}(t,K)$
can occur (see, e.g. Fig.~\ref{fig:tLD}).
Thus, it is reasonable to compute $\inf_{t \in [0,T)}\dim_{\rm L}(t,K)$
instead of $\dim_{\rm L}(T,K)$,
but, at the same time, for any $T$
the value $\dim_{\rm L}(T,K)$
gives also an upper estimate of $\dim_{\rm H}K$.

Finally, in the numerical experiments,
based on the \emph{finite-time Lyapunov dimension} definition \eqref{DOinf} from \cite{Kuznetsov-2016-PLA}
and the Douady--Oesterl\'{e} theorem \cite{DouadyO-1980},
we have
\begin{equation}\label{dimLmunest}
\begin{aligned}
  \dim_{\rm H}K
  \leq
  \dim_{\rm L}K
  \approx \inf_{t \in [0,T]} \max_{u \in K^{\varepsilon}_{\rm grid}} \dim_{\rm L}(t, u) \\
 =\inf_{t \in [0,T]}\max\limits_{u \in K^{\varepsilon}_{\rm grid}}
 \left(j(t,u) + \tfrac{\LEs_1(t,u) + \cdots +
 \LEs_{j(t,u)}(t,u)}{|\LEs_{j(t,u)+1}(t,u)|}
 \right)\\
  \leq
  \max_{u \in K^{\varepsilon}_{\rm grid}} \dim_{\rm L}(T, u) \approx \dim_{\rm L}(T,K).
\end{aligned}
\end{equation}

\subsection{Algorithm for numerical computation of the finite-time Lyapunov exponents}

Nowadays there are several widely used approaches
to numerical computation of the Lyapunov exponents, thus,
it is important \emph{to state clearly how the LEs being computed} \cite[p.121]{ChaosBook}.
Next we demonstrate the differences in the approaches.
To compute the finite-time Lyapunov exponents
one has to find the fundamental matrix $\Phi(t,u_0) = D\varphi^t(u_0)$
  of \eqref{sfl} from the following variational equation
  \begin{equation}\label{vareq}
    \!\begin{cases}
      \!\dot{u}(s,u_0)\!=\!f(u(s,u_0)), \ \ u(0,u_0) = u_0 \in U, \\
      \!\dot{\Phi}(s,u_0)\!=\!J(u(s,u_0)) \, \Phi(s,u_0),
       \ \Phi(0,u_0)\!=\!I,
    \end{cases}
    \!\!\!s \in [0,t],
  \end{equation}
  and its \emph{Singular Value Decomposition} (SVD)\footnote{
  See, e.g. implementation in MATLAB or GNU Octave (\url{https://octave-online.net}):
  {\ttfamily
  [U,S,V]=svd(A)}.
  }
  \[
    \Phi(t,u_0) \overset{\text{SVD}}{=}U(t,u_0){\rm \Sigma}(t,u_0){\rm V}^*(t,u_0),
  \]
  where $U(t,u_0)^*U(t,u_0) \equiv I \equiv {\rm V}(t,u_0)^*{\rm V}(t,u_0)$,
  ${\rm \Sigma}(t,u_0)=\text{\rm diag}\{\sigma_1(t,u_0),\sigma_2(t,u_0),\sigma_3(t,u_0)\}$
  is a diagonal matrix composed by the \emph{singular values} of $\Phi(t,u_0)$,
  and  compute the finite-time Lyapunov exponents $\{\LEs_i(t,u_0)\}_1^3$
  from ${\rm \Sigma}(t,\,u_0)$ as in \eqref{ftLE}.
  Further we also need the QR decomposition\footnote{
  For example, it can be done by the Gram-Schmidt orthogonalization procedure
  or the Householder transformation.
  MATLAB and GNU Octave (\url{https://octave-online.net})
  provide an implementation of the QR decomposition
  {\ttfamily [Q, R] = qr(A)}.
  To have matrix $R$ with positive diagonal elements, one can additionally use
  {\ttfamily Q = Q*diag(sign(diag(R))); R = R*diag(sign(diag(R)))}.
  See also \cite{RamasubramanianS-2000}.
  }
  \[
     \Phi(t,u_0) \overset{\text{QR}}{=} Q(t,u_0)R(t,u_0),
  \]
  where $R(t,u_0)$ is upper-triangular matrix with nonnegative diagonal elements
  $\{R[i,i]=R[i,i](t,u_0)\}_1^3$
  and $Q(t,u_0)^*Q(t,u_0) \equiv I$.

  To avoid the exponential growth of values in the computation,
  the time interval has to be represented as
  a union of sufficiently small intervals,
  e.g. $(0, T] =  (0,\tau]\cup(\tau,2\tau]\cdots \cup((k-1)\tau,k\tau=T]$.
  Then, using the cocycle property, the fundamental matrix can be represented as
  \begin{equation}\label{eq:mat_prod}
    \Phi(k \tau,u_0) = \Phi(\tau,u_{k-1}) \, \dots \,
    \Phi(\tau,u_{1}) \, \Phi(\tau,u_{0}).
  \end{equation}
  Here if $\Phi(m \tau,u_0)$ and $u_m=u(m \tau,u_0)$ are known,
  then $\Phi((m+1) \tau,u_0) = \Phi(\tau,u_{m}) \Phi(m \tau,u_0)$, where
  $\Phi(\tau,u_{m})$ is the solution of initial value problem \eqref{vareq}
  with $u(0) = u_m$ on the time interval $[0, \tau]$.

  By sequential QR decomposition of the product of matrices in \eqref{eq:mat_prod} we get
  \[
  \begin{aligned}
   & \Phi(k \tau,u_0) = \Phi(\tau,u_{k-1}) ..
    \Phi(\tau,u_{1}) \, \boxed{\Phi(\tau,u_{0})} =
    \\ &
    = \Phi(\tau,u_{k-1}) ..
    \boxed{\Phi(\tau,u_{1}) \, Q^0_1} \, R^0_1
    = .. \overset{\text{QR}}{=} \, \overbrace{Q^0_k}^{Q} \,\overbrace{R^0_k ..R^0_1}^{R}.
  \end{aligned}
  \]
  Then matrix with singular values
  $
   \Sigma(k\tau, u_0) =  U^*(k\tau,u_0) \,\Phi(k \tau,u_0) \, V(k\tau,u_0)
  $
   in the SVD can be approximated by sequential QR decomposition of the product of matrices:
  \[
  \begin{aligned}
    & \Sigma^0 = \Phi(k \tau,u_0)^* \, Q^0_k = (R^0_1)^* .. (R^0_k)^*
    \overset{\text{QR}}{=} \, Q^1_k \, R^1_k .. R^1_1,
    \\ &
    \Sigma^1 = (Q^0_k)^* \, \Phi(k \tau,u_0) \, Q^1_k =
    (R^1_1)^* .. (R^1_k)^*
    \overset{\text{QR}}{=} Q^2_k R^2_k .. R^2_1, \\
    & \dots
  \end{aligned}
  \]
  where
 \[
  \begin{aligned}
   & \Sigma^{j} = (R^{j}_1)^* .. (R^{j}_k)^*
  \!=\!\!\left(
    \begin{matrix}
      \sigma_1^{j} & 0 & 0 \\
      \cdot & \sigma_2^{j} & 0 \\
      \cdot & \cdot & \sigma_3^{j}
    \end{matrix}
    \right)\!\!
  \end{aligned}
 \]
 and \cite{RutishauserS-1963,Stewart-1997}
 \[
  \sigma_i^{j} = R^{j}_1[i,i]..R^{j}_k[i,i] \underset{j \to \infty}{\longrightarrow} \sigma_i(k\tau,u_0).
  \]
  Thus, the finite-time Lyapunov exponents can be approximated as
  \begin{equation}\label{LEapprox}
  \!\LEs_i(T, u_0)\!\approx\!
    \LEs_i^{j}(k\tau, u_0)\!=\!\tfrac{1}{t}\!\ln \sigma_i^{j}\!=\!
    \tfrac{1}{k\tau}\!\sum_{l = 1}^{k} \ln R_{l}^j[i,i].
  \end{equation}
  The MATLAB implementation of the above method
  for the computation of finite-time Lyapunov exponents
  with the fixed number of iterations $j$
  can be found, e.g., in \cite{LeonovKM-2015-EPJST}.
  For large $k$ the convergence can be very rapid:
  e.g. for the Lorenz system with the classical parameters
  ($r = 28$, $\sigma = 10$, $b = 8/3$, $a = 0$),
  $k = 1000$ and $\tau = 1$
  the number of approximations $j=1$ is taken in \cite[p.~44]{Stewart-1997}.
  For a more precise approximation of the finite-time Lyapunov exponents
  we can adaptively choose $j=j(l)$, $l\!=\!1,...,k$ so as to obtain a uniform estimate of
  \begin{equation}\label{LEapproxuni}
  \begin{aligned}
    \!\!\!\max_{i}|\LEs_i^{j\!-\!1}(l\tau, u_0)-\LEs_i^{j}(l\tau, u_0)|< \delta.
  \end{aligned}
  \end{equation}

  Remark that there is another widely used definition of the ``Lyapunov exponents''
  via the exponential growth rates of norms of the fundamental matrix columns
  $\big(v_1(t,u_0),v_2(t,u_0),v_3(t,u_0)\big)=\Phi(t,u_0)$:
   the \emph{finite-time Lyapunov characteristic exponents}
  $\{\LCEs_i(t,u_0)\}_{1}^3$
  are the set $\{\frac{1}{t}\ln||v^i(t,u_0)||\}_{1}^3$
  ordered by decreasing\footnote{
  To obtain all possible limit values of the
  finite-time Lyapunov characteristic exponents (LCEs) \cite{Lyapunov-1892}
  of a linear system ($\{\limsup_{t\to+\infty}\LCEs_i(t,u_0)\}_1^3$),
  one has to consider
  a \emph{normal fundamental matrix}, 
  whose sum of LCEs of columns is less or equal to
  the sum of LCEs of any other fundamental matrix \cite{Lyapunov-1892}.
}.
Benettin~et~al.~\cite{BenettinGGS-1980-Part2}
(Benettin's algorithm)
approximate the LCEs by \eqref{LEapprox} with $j=0$:
\begin{equation}\label{LCEapprox}
  \!\LCEs_i(k\tau, u_0)\!\approx\! \LEs_i^{0}(k\tau, u_0) =
  \frac{1}{k\tau}\!\sum_{l = 1}^{k} \ln R_{l}^0[i,i].
\end{equation}
The LCEs may differ from LEs,
thus, the corresponding Kaplan-Yorke formulas with respect to LEs and LCEs:
$\dim_{\rm L}(t, u_0) =d_{\rm L}^{\rm KY}(\{\LEs_i(t,u_0)\}_1^3)$ and
$d_{\rm L}^{\rm KY}(\{\LCEs_i(t,u_0)\}_1^3)$\footnote{
Rabinovich~et~al.~\cite[p.203,p.262]{RabinovichEW-2000}
refer this value as \emph{local dimension}
and note that it is a function of time and
may be different in different parts of the attractor.
},
may not coincide.
The following artificial analytical example
demonstrates the difference between LEs and LCEs.
The matrix \cite{Kuznetsov-2016-PLA,KuznetsovAL-2016}
\[
  R(t)\!=\!\left(\!\!
    \begin{array}{cc}
      1 & g(t)-g^{-1}(t)\\
      0 & 1 \\
    \end{array}
  \!\!\right), \ g(t)=\exp(\tfrac{t}{10})
\]
has the following ordered exact limit values
\[
\begin{aligned}
& \LCEs_1 =\lim\limits_{t \to +\infty} t^{-1} \ln g(t) = 0.1, \quad
\LCEs_2  = 0, \\
& \LEs_{1,2} =  \lim\limits_{t \to +\infty}t^{-1} \ln g^{\pm 1}(t) = \pm 0.1,
\end{aligned}
\]
where $\LCEs_2 \neq \LEs_2$.
 For the finite-time values we have
  \[\begin{aligned}
  & \LCEs_1(t)\!=\!\tfrac{1}{t} \ln\!\big((g(t)\!-\frac{1}{g(t)})^2+1\big)^{\tfrac{1}{2}}
  \!\in\!(0, 0.1],
  \LCEs_2(t)\!\equiv\!0, \\
  & \LEs_{1,2}(t) \equiv \LEs_{1,2} = \pm 0.1.
 \end{aligned}
 \]
Approximations by the above algorithm with $k=1$
are given in Table~\ref{table:LEs}.
 \begin{table}[ht]
  \centering
  \caption{Approximation of the finite-time Lyapunov exponents.}
  \begin{tabular}{
  |>{\centering}m{0.6cm}<{\centering}||
  >{\centering}m{2cm}<{\centering}|
  >{\centering}m{2cm}<{\centering}|
  >{\centering}m{2cm}<{\centering}|}
  \hline
  $j$ & $\LEs_{1,2}^{j}(5)$ & $\LEs_{1,2}^{j}(25)$ & $\LEs_{1,2}^{j}(100)$
  \tabularnewline\hhline{|=#=|=|=|}
  $0$ & $0$ & $0$ & $0$
  \tabularnewline\hline
  $1$ & $\pm 0.00797875$ & $\pm 0.04394912$ & $\pm 0.09360078$
  \tabularnewline\hline
  $2$ & $\pm 0.01585661$ & $\pm 0.07379280$ & $\pm 0.09986978$
  \tabularnewline\hline
  $3$ & $\pm 0.02353772$ & $\pm 0.08902280$ & $\pm 0.09999751$
  \tabularnewline\hline
  $4$ & $\pm 0.03093577$ & $\pm 0.09563887$ & $\pm 0.09999995$
  \tabularnewline\hline
  $5$ & $\pm 0.03797757$ & $\pm 0.09830568$ & $\pm 0.09999999$
  \tabularnewline\hline
  $10$ & $\pm 0.06638388$ & $\pm 0.09998593$ & $\pm 0.10000000$
  \tabularnewline\hline
  $50$ & $\pm 0.09993286$ & $\pm 0.09999999$ & $\pm 0.10000000$
  \tabularnewline\hline
  $100$ & $\pm 0.09999998$ & $\pm 0.09999999$ & $\pm 0.10000000$
  \tabularnewline\hline
  \end{tabular}
  \label{table:LEs}
  \end{table}

\noindent Remark that here the approximation of LCEs by Benettin's algorithm,
i.e. by \eqref{LCEapprox},
becomes worse with increasing time:
\begin{equation}\label{BenettinWrong}
 \begin{aligned}
  & \LCEs_1(t) \underset{t \to +0}{\longrightarrow} 0
  \equiv  \LEs_i^{0}(t) = \frac{1}{t} \ln 1 \equiv 0, \\
  & \LCEs_1(t) \underset{t \to +\infty}{\longrightarrow} 0.1
  \neq  \LEs_i^{0}(t) = \frac{1}{t} \ln 1 \equiv 0.
 \end{aligned}
\end{equation}
Thus, although relying on ergodicity, the notions of LCEs and LEs
often do not differ (see, e.g. Eckmann \& Ruelle \cite[p.620,p.650]{EckmannR-1985},
Wolf~et~al.~\cite[p.286,p.290-291]{WolfSSV-1985},
and Abarbanel~et~al.~\cite[p.1363,p.1364]{AbarbanelBST-1993}),
in the general case, the computations of LCEs by \eqref{LEapprox}
and LEs by \eqref{LCEapprox} may give non relevant results.
See also \cite[p.289]{BylovVGN-1966}, \cite[p.1083]{LeonovK-2007},
and numerical examples below.

\subsection{Estimation of the Lyapunov dimension \\ without integration of the system
and \\ the exact Lyapunov dimension}

While analytical computation of the Lyapunov exponents and Lyapunov dimension
is impossible in a general case, they can be estimated by
the eigenvalues of the symmetrized Jacobian matrix \cite{DouadyO-1980,Smith-1986}.
Let $\{\lambda_i(u_0)\}_{i=1}^{3}$ be the eigenvalues of the symmetrized Jacobian matrix
$\frac{1}{2} \left(J(u(t,u_0)) + J(u(t,u_0))^{*}\right)$,
ordered so that $\lambda_1(u_0) \ge \lambda_2(u_0) \ge \lambda_3(u_0)$.
The \emph{Kaplan-Yorke formula with respect to
the ordered set of eigenvalues of the symmetrized Jacobian matrix} \cite{Kuznetsov-2016-PLA}
gives an upper estimation of the Lyapunov dimension:
\(
 \dim_{\rm L}K \leq
 \sup_{u \in K}d_{\rm L}^{\rm KY}\big(\{\lambda_{j}(u_0)\}_{i=1}^3\big)
\).
In the general case,
one cannot get the same values of $\{\lambda_{j}(u_0)\}_{i=1}^3$
at different points $u_0$,
thus, the maximum of $d_{\rm L}^{\rm KY}\big(\{\lambda_{j}(u_0)\}_{i}^3\big)$
on $K$ has to be computed.
To avoid numerical localization of the set $K$,
we can consider an analytical localization,
e.g. by the absorbing set $\mathcal{B} \supset K$.
Thus, for the corresponding grid of points $\mathcal{B}_{\rm grid}$
we expect in numerical experiments the following
\begin{multline}\label{dLKYeigcomp}
 \dim_{\rm H}K \leq \dim_{\rm L}K
 \leq \sup_{u_0 \in K} d_{\rm L}^{\rm KY}\big(\{\lambda_{j}(u_0)\}_{1}^3\big) \leq  \\
 \sup_{u_0 \in \mathcal{B}}\!\! d_{\rm L}^{\rm KY}\big(\{\lambda_{j}(u_0)\}_{1}^3\big)
 \!\approx\!\!
 \max_{u_0 \in \mathcal{B}_{\rm grid}} \!\!
 j(u_0) + \tfrac{\lambda_{1}(u_0) + \cdot\cdot + \lambda_{j(u_0)}(u_0)}
 {|\lambda_{j(u_0)+1}(u_0)|}.
\end{multline}

If the Jacobian matrix $J(u_{eq})$ at one of the equilibria has simple real eigenvalues:
$\{\lambda_i(u_{eq})\}_{i=1}^3$, $\lambda_{i}(u_{eq}) \geq \lambda_{i+1}(u_{eq})$,
then \cite{Kuznetsov-2016-PLA} the invariance of the Lyapunov dimension
with respect to linear change of variables implies
\begin{equation}\label{dimLeq}
  \dim_{\rm L}u_{eq}
  =
  d_{\rm L}^{\rm KY}(\{\lambda_i(u_{eq})\}_{i=1}^3).
\end{equation}
If the maximum of local Lyapunov dimensions on the global attractors,
which involves all equilibria, is achieved at an equilibrium point:
$\dim_{\rm L} u^{cr}_{eq} = \max_{u_0 \in K} \dim_{\rm L} u_0$,
then this allows one to get analytical formula of
the \emph{exact Lyapunov dimension}\footnote{
This term was suggested by Doering~et~al.~in~\cite{DoeringGHN-1987}.}.
In general, a \emph{conjecture on the Lyapunov dimension of self-excited attractor} \cite{Kuznetsov-2016-PLA,KuznetsovL-2016-ArXiv}
is that for a typical system
the Lyapunov dimension of a self-excited attractor
does not exceed the Lyapunov dimension of one of unstable equilibria,
the unstable manifold of which intersects with the basin of attraction
and visualize the attractor.

To avoid numerical computation of the eigenvalues,
one can use an effective analytical approach
\cite{Leonov-1991-Vest,LeonovB-1992,Kuznetsov-2016-PLA},
which is based on a combination of the Douady-Oesterl\'{e}
approach with the direct Lyapunov method:
for example, in \cite{LeonovB-1992} for
system \eqref{sys:lorenz-general} with $b = 1$
it is analytically obtained the following estimate 
\[
  \dim_{\rm L} K \leq
  3 - \frac{2 (\sigma + 2)}{\sigma + 1 + \sqrt{(\sigma - 1)^2 + \frac{16\,r}{3\,\sigma}}}.
\]
The proof of the above conjecture and
analytical derivation of the exact Lyapunov dimension formula
for system \eqref{sys:lorenz-general} is an open problem.

In \cite{BoichenkoL-1998,PogromskyM-2011} it is demonstrated how a technique similar to the above can be effectively used to derive constructive upper bounds of the topological entropy of dynamical systems.

\begin{figure}[t]
    \centering
    \includegraphics[width=0.49\textwidth]{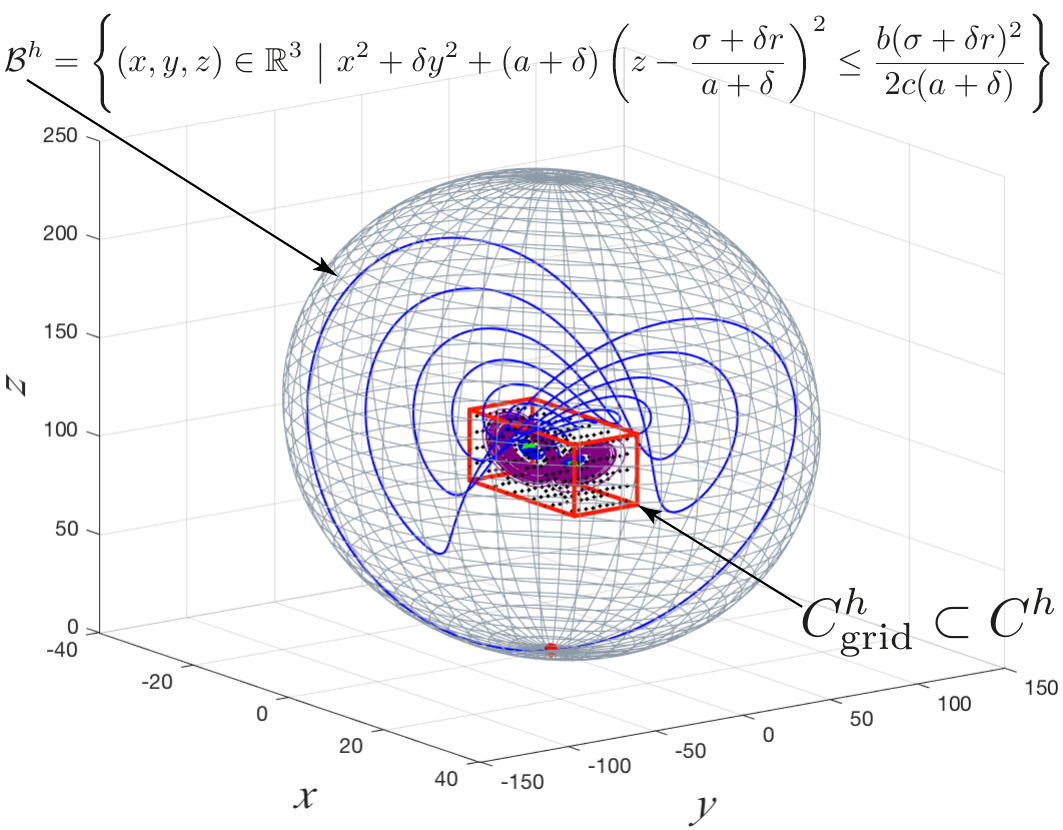}
    \caption{Localization of the hidden attractor of system \eqref{sys:lorenz-general}
    with $r = 100$, $a = -9.965 \cdot 10^{-3}$, $b = 7.7454 \cdot 10^{-2}$
    by the absorbing set
    $\mathcal{B}^{h}$ with $\delta = -a + 0.1$,
    cuboid $C^{h} =[-11,11]\times[-17,19]\times[80,117]$,
    and the corresponding grid of points $C^{h}_{\rm grid}$.
    }
    \label{fig:rab:grid}
\end{figure}

\section{The finite-time Lyapunov dimension in the case of hidden attractor and multistability}
Consider the dynamical system $\{\varphi^t\}_{t\geq0}$
generated by system \eqref{sys:lorenz-general}
with parameters \eqref{eq:params-relation} and
its attractor $K$. 
Here $\varphi^t\big((x_0,y_0,z_0)\big)$
is a solution of \eqref{sys:lorenz-general}
with the initial datum  $(x_0,y_0,z_0)$.
Since the dynamical system $\{\varphi^t_{\rm R}\}_{t\geq0}$,
generated by the Rabinovich system \eqref{sys:rabinovich},
can be obtained
from $\{\varphi^t\}_{t\geq0}$ by the smooth transformation $\chi^{-1}$,
inverse to \eqref{xyzchange}, and inverse rescaling time \eqref{timechange}
$t \to \nu_1 t$,
we have
$\dim_{\rm L}(\{\varphi^t\}_{t\geq0},K)=\dim_{\rm L}(\{\varphi^t_{\rm R}\}_{t\geq0},\chi^{-1}(K))$.
In our experiments, we consider system \eqref{sys:lorenz-general}
with parameters $r = 100$, $a = -9.965 \cdot 10^{-3}$, $b = 7.7454 \cdot 10^{-2}$
corresponding to the hidden chaotic attractor.

\begin{table*}[ht]
\centering
\caption{
Numerical estimation of finite-time Lyapunov dimension
in the case of hidden attractor (see Fig.~\ref{fig:rabinovich:attr:hidden})}
\begin{tabular}{
|>{\centering}m{3.4cm}<{\centering}||
>{\centering}m{3.3cm}<{\centering}|
>{\centering}m{5cm}<{\centering}|
>{\centering}m{2cm}<{\centering}|
>{\centering}m{2.3cm}<{\centering}|}
\hline
& $t = 100$ \\ $u = (0, \, 1, \, 98)$ & $t = 100$ \\ $u = (0.0099, \, 0.0995, \, 0)$ &
$t = 100$ \\ $\max_{u \in C^{h}_{\rm grid}}$ & $\displaystyle \inf_{t \in [0,\, 100]} \max_{u \in C^{h}_{\rm grid}}$
\tabularnewline\hhline{|=#=|=|=|=|}
$\dim_{\rm L}(t, \, u)$ = ${\small {d}_{\rm L}^{\rm KY}(\{{\rm LE}_i(t, \, u)\}_{i=1}^3)}$ \\
 & $2.1474$ & $1.4987$ & $2.2063$ & $2.2050$
\tabularnewline\hline
${\small {d}_{\rm L}^{\rm KY}(\{{\rm LCE}_i(t, \, u)\}_{i=1}^3)}$ \\
 & $2.1338$ & $1.1213$ & $2.2105$ & $2.2076$
\tabularnewline\hline
\end{tabular}
\label{table:results:hid}
\end{table*}

\begin{figure*}[ht]
  \centering
    \includegraphics[width=\textwidth]{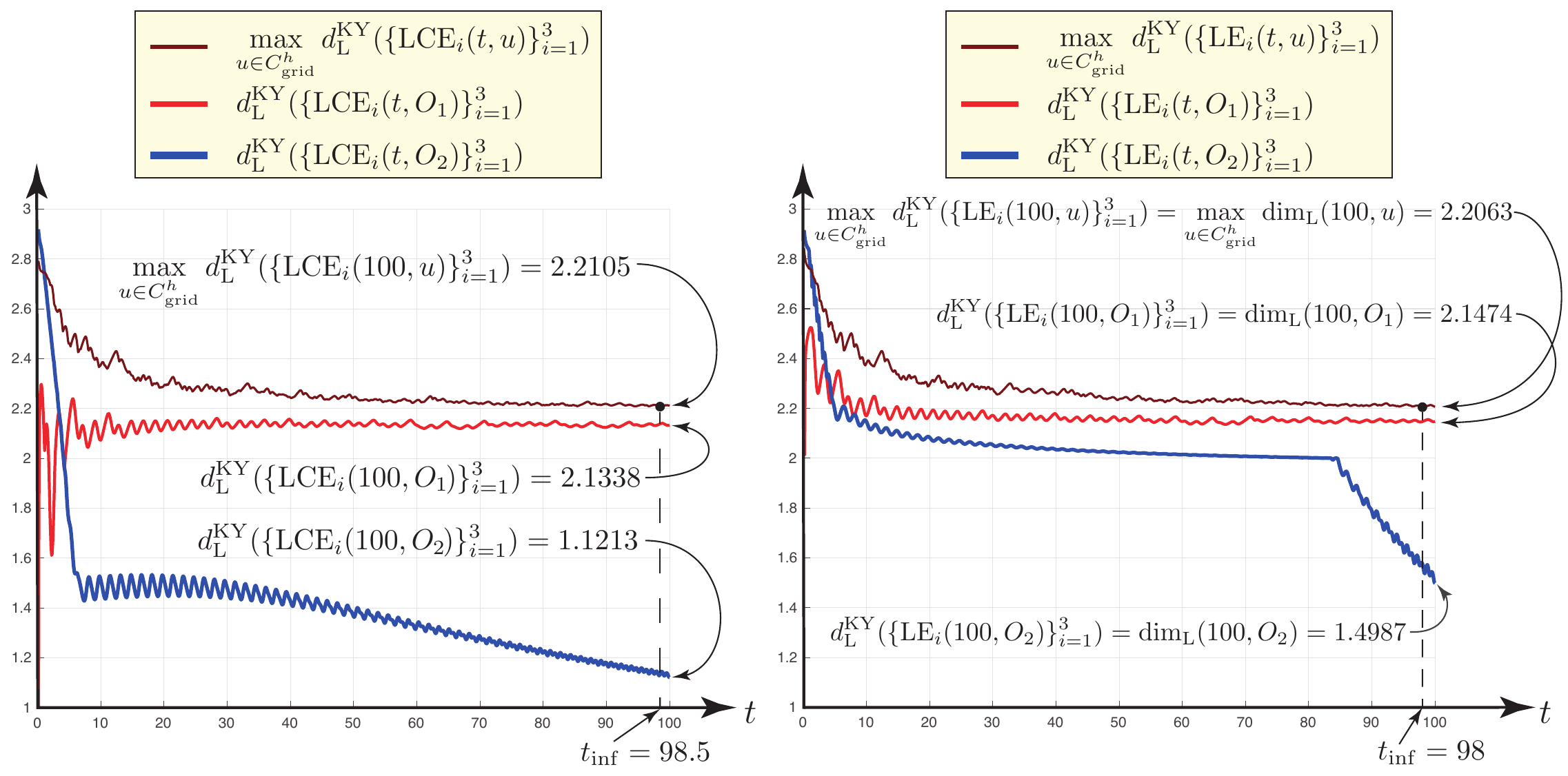}
  \caption{
    Dynamics of the finite-time local Lyapunov dimensions estimation
    on the time interval $t \in \lbrack 0, 100 \rbrack$: \\
    the maximum on the grid of points (dark red),
    at the point $O_1 = (0, \, 1, \, 98) \in C^{h}_{\rm grid}$ (light red), \\
    at the point $O_2 = (0.0099, \, 0.0995, \, 0)$
    from the one-dimensional unstable manifold of $S_0$ (blue).
  }
  \label{fig:tLD}
\end{figure*}

In Fig.~\ref{fig:rab:grid} it is shown the grid of points $C^{h}_{\rm grid}$
filling the hidden attractor:
the grid of points fills cuboid $C^{h} = [-11,11]\times[-17,19]\times[80,117]$
with the distance between points equals to $0.5$.
The time interval is $[0,\, T=100]$, $k=1000$, $\tau=0.1$,
and the integration method is MATLAB ode45 with predefined parameters.
The infimum on the time interval
is computed at the points $\{t_k\}_{1}^{N}$ with time step $\tau=t_{i+1}-t_i=0.1$.
Note that if for a certain time $t=t_k$ the computed trajectory is out of the cuboid,
the corresponding value of finite-time local Lyapunov dimension
is not taken into account in the computation of maximum
of the finite-time local Lyapunov dimension
(e.g. there are trajectories with initial data in cuboid,
which are attracted to the zero equilibria, i.e. belong to its stable manifold,
e.g. system \eqref{sys:lorenz-general} with $x=y=0$ is $\dot z = -bz$).
For the finite-time Lyapunov exponents (FTLEs) computation
we use MATLAB realization from \cite{LeonovKM-2015-EPJST}
based on \eqref{LEapprox} with $j=2$.
For computation of the finite-time Lyapunov characteristic exponents (FTLCEs)
we use MATLAB realization from \cite{KuznetsovMV-2014-CNSNS} based on \eqref{LCEapprox}.
For the considered set of parameters we compute:
\begin{enumerate}[label=(\roman*)]
  \item finite-time local Lyapunov dimensions $\dim_{\rm L}(100,\cdot)$
      at the point $O_1 = (0, \, 1, \, 98)$, which belongs to the grid $C^{h}_{\rm grid}$,
      and at the point $O_2 = (0.0099, \, 0.0995, \, 0)$ on the unstable manifold of zero equilibrium $S_0$;
  \item maximum of the finite-time local Lyapunov dimensions at the points of grid,
      $\max_{u \in C^{h}_{\rm grid}} \dim_{\rm L}(t,u)$,
      for the time points $t=t_k=0.1\,k$ $(k=1,..,1000)$;
  \item the corresponding values, given by the Kaplan-Yorke formula
      with respect to finite-time Lyapunov characteristic exponents.
\end{enumerate}


The results are given in Table~\ref{table:results:hid}.
The dynamics of finite-time local Lyapunov dimensions
for different points and their maximums on a grid of points
are shown in Fig.~\ref{fig:tLD}.

For the absorbing set $\mathcal{B}^{h}$ and the corresponding
grid of points $\mathcal{B}^{h}_{\rm grid}$
(the distance between grid points is 5),
by estimation \eqref{dLKYeigcomp} 
we get the following estimate:
\begin{multline}\label{KYeigonabs-h}
 \dim_{\rm H}K \leq
 \dim_{\rm L} K \leq
 \sup_{u \in \mathcal{B}^{h}}d_{\rm L}^{\rm KY}\big(\{\lambda_{j}(u)\}_{i=1}^3\big) \\
 \approx
 \sup_{u \in \mathcal{B}^{h}_{\rm grid}}d_{\rm L}^{\rm KY}\big(\{\lambda_{j}(u)\}_{i=1}^3\big)
 = 2.97001... \, .
\end{multline}

Assuming $\sigma + 1 \geq b$, the eigenvalues of the unstable zero equilibrium $S_0$
\[
  \lambda_{1,3}(S_0) = -\tfrac{1}{2} \left[(\sigma + 1)\!\mp\!\sqrt{(\sigma - 1)^2 + 4 \sigma r}\right]\!,\,\lambda_{2}(S_0) = - b
\]
 have the following order
$\lambda_{1}(S_0) > \lambda_{2}(S_0) \geq \lambda_{3}(S_0)$ and
by \eqref{dimLeq}
for the considered values of parameters 
we get
\begin{multline}\label{dimS0-val}
  \dim_{\rm L} S_0 =  d_{\rm L}^{\rm KY}(\{\lambda_i(S_0)\}_{i=1}^3)
  = 2 + \tfrac{\lambda_{1}(S_0) + \lambda_{2}(S_0)}{|\lambda_{3}(S_0)|} \\
    \\
  = 3 - \tfrac{2 (\sigma + b + 1) }{(\sigma + 1) + \sqrt{(\sigma - 1)^2 + 4 \sigma r}}
  =2.8111...\ .
\end{multline}

\begin{figure*}[t]
 \centering
 \subfloat[{Trajectory $u(t,u_{\rm init})$ for $t \in [0,~T_1]$, \, $T_1 \approx 15295$.}]{
    \label{fig:rab:attr:chaotic}
    \includegraphics[width=0.425\textwidth]{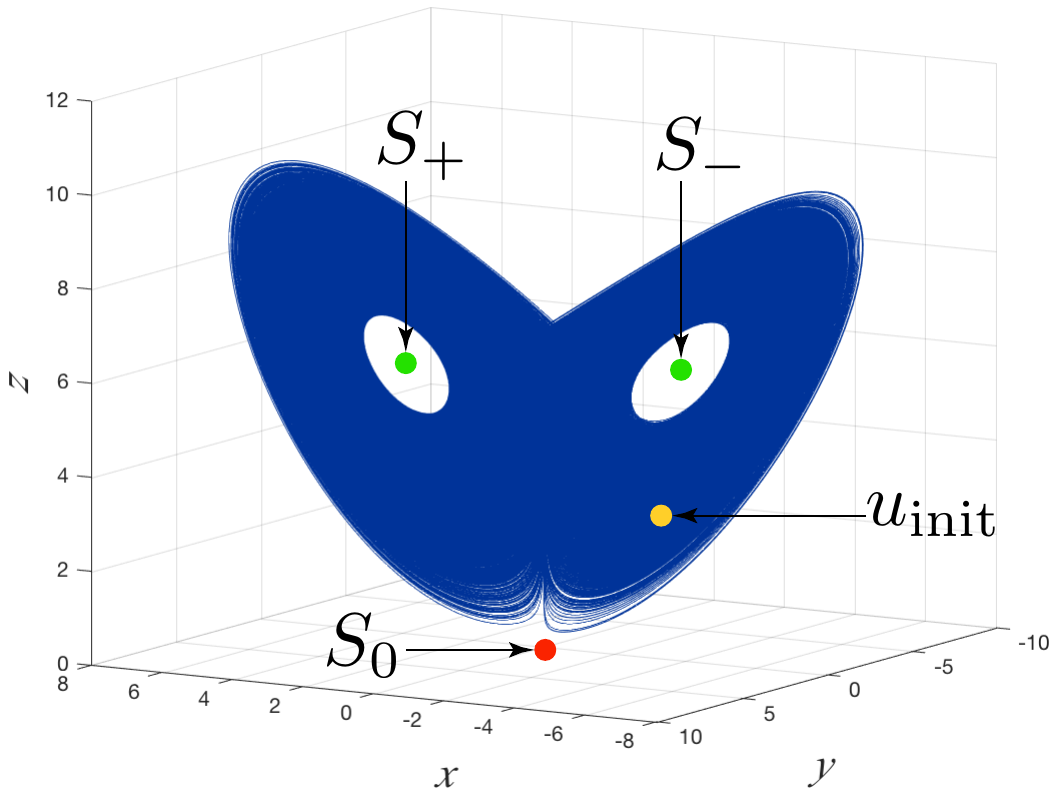}
  }\quad
  \subfloat[{Trajectory $u(t,u_{\rm init})$ for $t \in [0,~T]$, \, $T = 500000$}]{
    \label{fig:rab:attr:transient}
    \includegraphics[width=0.425\textwidth]{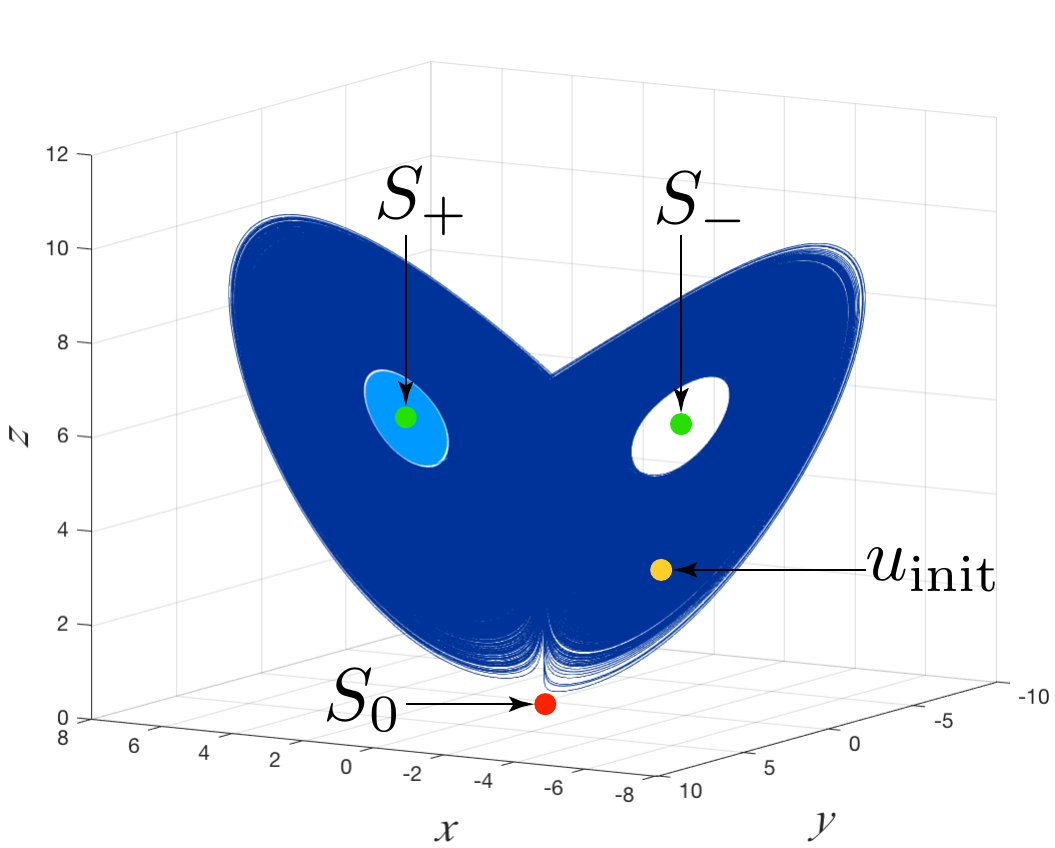}
  }
\caption{
The trajectory forms a chaotic set, which looks like an ``\emph{attractor}'' (navy blue)
and then 
tends to $S_+$ (cyan).}
\end{figure*}
\begin{figure*}[t]
 \centering
 \subfloat[{$\LCEs_1(t, \, u_{\rm init})$, $t \in [0,~T_1]$, \, $T_1 \approx 15295$.}]{
    \label{fig:rab:LCE1:chaotic}
    \includegraphics[width=0.425\textwidth]{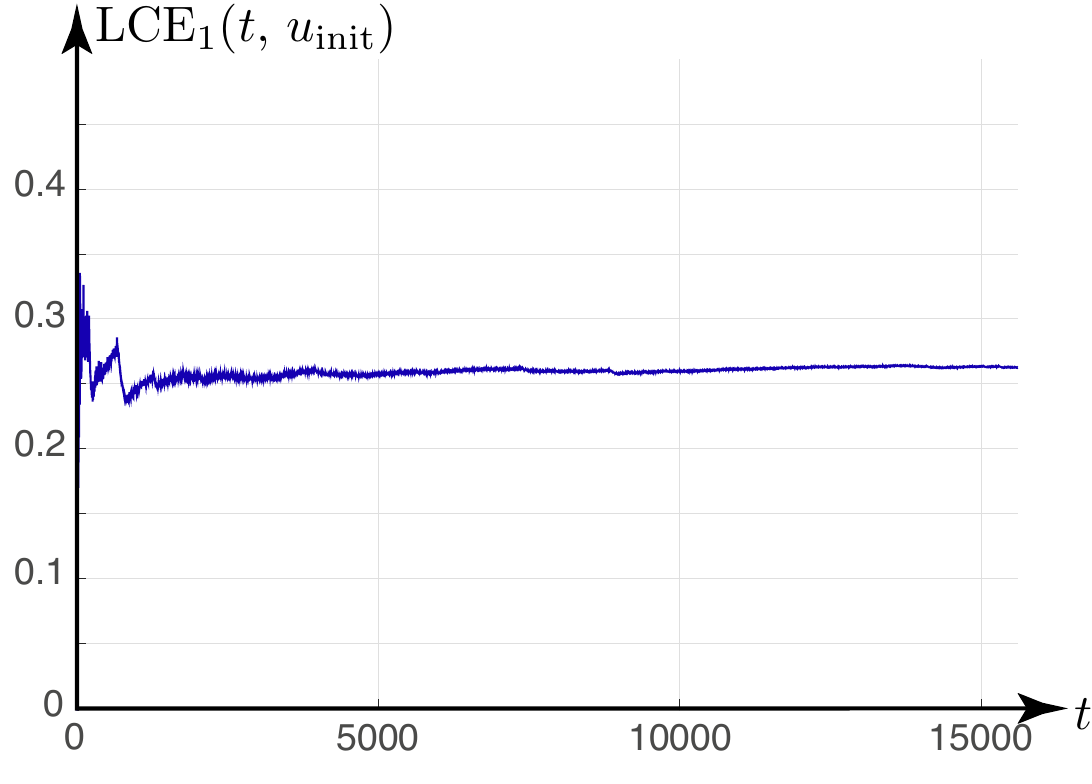}
  }\quad
 \subfloat[{${d}_{\rm L}^{\rm KY}(\{\LCEs_i(t, \, u_{\rm init})\}_{i=1}^3)$,
  $t \in [0,~T_1]$, \, $T_1 \approx 15295$.}]{
    \label{fig:rab:LD:chaotic}
    \includegraphics[width=0.425\textwidth]{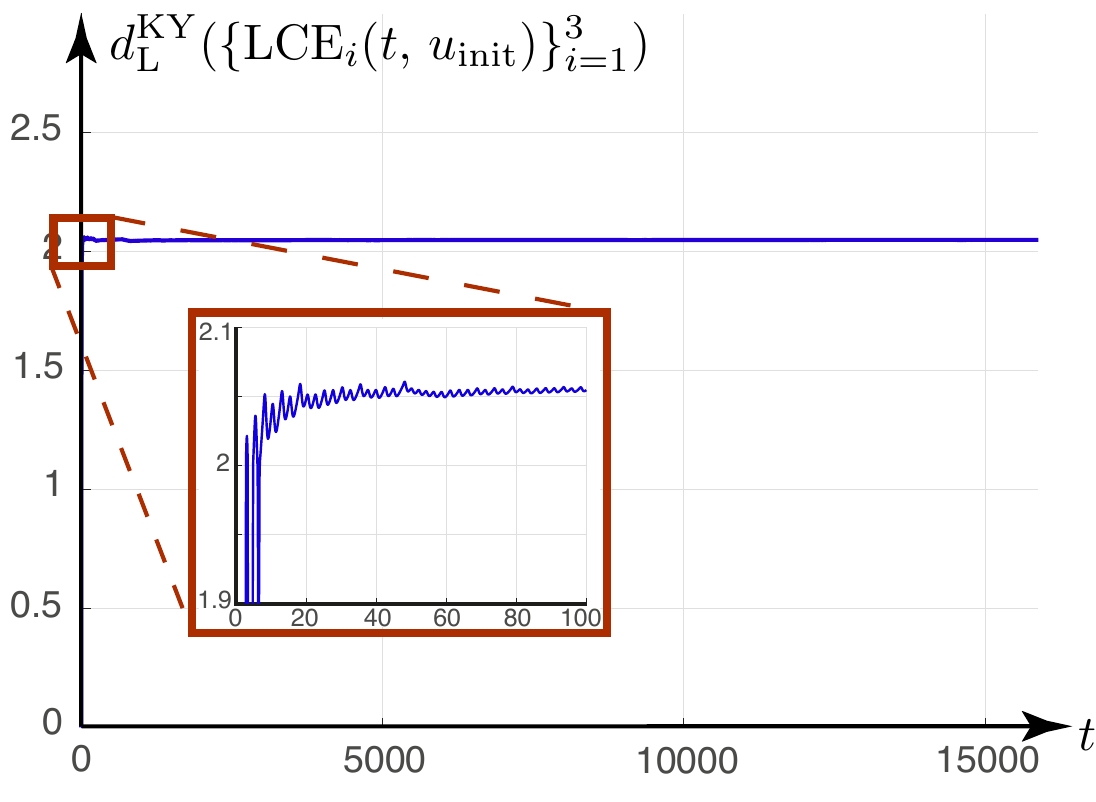}
  }
\caption{
  Numerical computation of $\LCEs_1(t, \, u_{\rm init})$
  and  ${d}_{\rm L}^{\rm KY}(\{\LCEs_i(t, \, u_{\rm init})\}_{i=1}^3)$
  for the time interval $[0, T_1\approx 15295]$.
}
\end{figure*}
\begin{figure*}[!ht]
 \centering
 \subfloat[{$\LCEs_1(t, \, u_{\rm init})$, $t \in [0,~T]$, \, $T = 5 \cdot 10^5$.}]{
    \label{fig:rab:LCE1:transient}
    \includegraphics[width=0.425\textwidth]{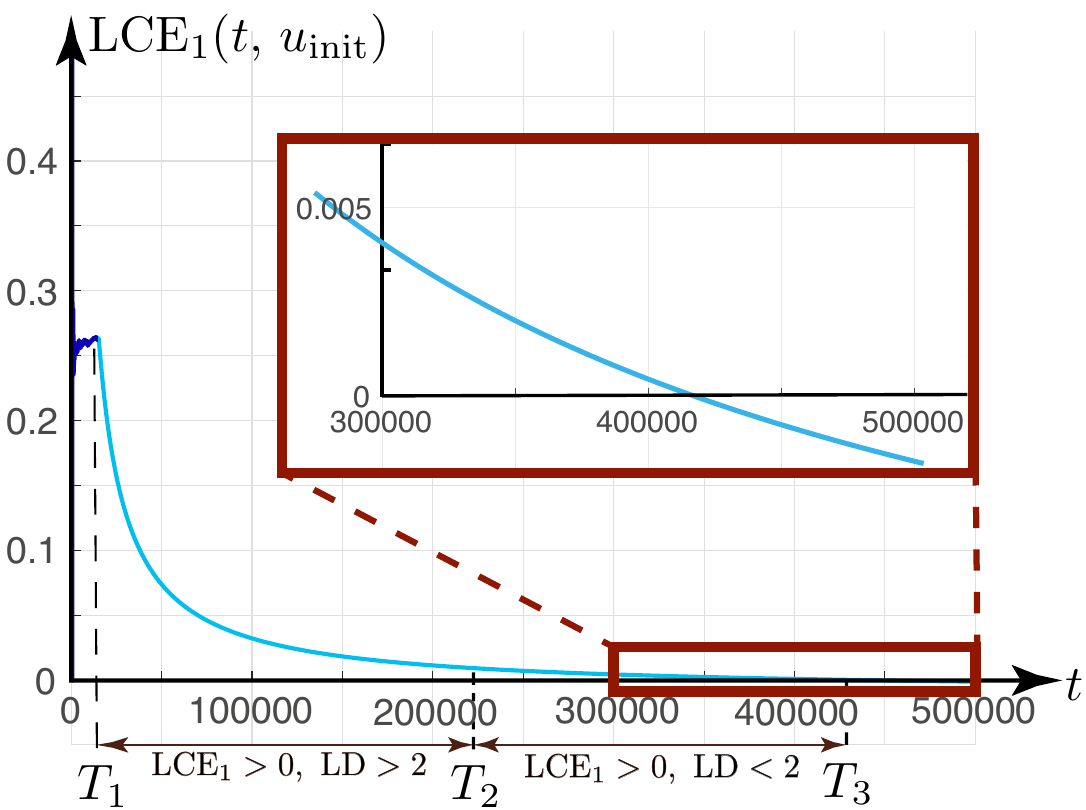}
  }\quad
 \subfloat[{${d}_{\rm L}^{\rm KY}(\{\LCEs_i(t, \, u_{\rm init})\}_{i=1}^3)$,
  $t \in [0,~T]$, \, $T = 5 \cdot 10^5$.}]{
    \label{fig:rab:LD:transient}
    \includegraphics[width=0.425\textwidth]{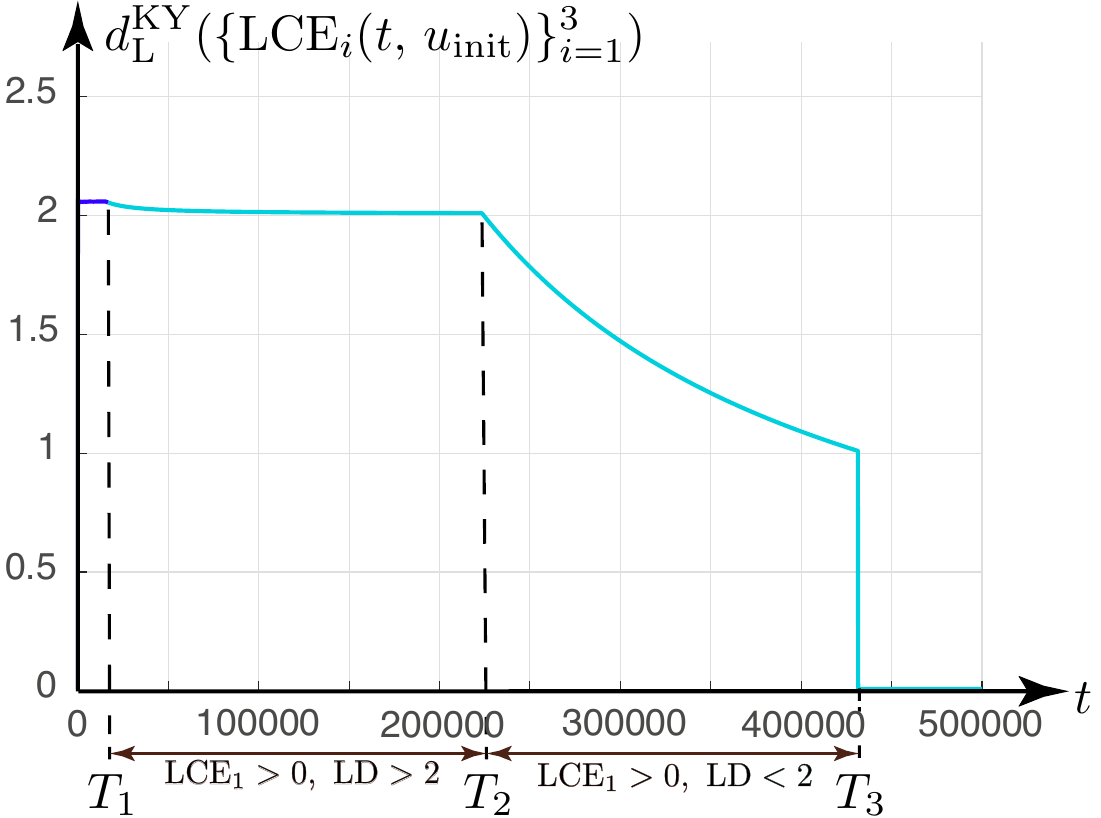}
  }
\caption{Numerical computation of $\LCEs_1(t, \, u_{\rm init})$
    and ${d}_{\rm L}^{\rm KY}(\{\LCEs_i(t, \, u_{\rm init})\}_{i=1}^3)$
    for the time interval $[0,\, 5 \cdot 10^5]$.
  }
\end{figure*}

The above numerical experiments lead to the following important remarks.
While the Lyapunov dimension,
unlike the Hausdorff dimension,
is not a dimension in the rigorous sense \cite{HurewiczW-1941} 
(e.g. the Lyapunov dimension of the saddle point $S_0$ in \eqref{dimS0-val} is noninteger),
it gives an upper estimate of the Hausdorff dimension.
If the attractor $K$ or the corresponding absorbing set $\mathcal{B} \supset K$
is known (see, e.g. \eqref{absorb_set})
and the purpose is to demonstrate that $\dim_{\rm H} K \leq \dim_{\rm L} K < 3$,
then it can be achieved
without integration of the considered dynamical system
(see, e.g. \eqref{KYeigonabs-h}).
If the purpose is to get a precise estimation of the Hausdorff dimension,
then one can use \eqref{dimLmunest} and has to compute
the finite-time Lyapunov dimension,
i.e. to find the maximum of the finite-time local Lyapunov dimensions
on a grid of points for a certain time.
To be able to repeat a computation of finite-time Lyapunov dimension,
one need to know the initial points of considered trajectories
$\{u_i\}_{i=1}^{N}=K_{\rm grid}$ on the set $K$,
time interval $(0,T] = \bigcup_{i=0}^{M-1} (t_i,t_{i+1}]$,
and the method of the finite-time Lyapunov estimation.

\section{Computation of the Lyapunov dimension and transient chaos} \label{sec:transient}

Now we consider an example, which demonstrates
difficulties in the reliable numerical computation of the Lyapunov dimension
(i.e. numerical approximation of the limit value of the finite-time Lyapunov dimension).

Consider system \eqref{sys:lorenz-general} with parameters
$r = 6.485$, $a = -0.5$, $b = 0.85$ for which equilibrium $S_0$ is a saddle point
and equilibria $S_\pm$ are stable focus-nodes.
We integrate numerically\footnote{
Our experiment was carried out on the
2.5 GHz Intel Core i7 MacBook Pro laptop,
for numerical integration we use MATLAB~R2016b.
To simplify the repetition of results,
we use the single-step fifth-order Runge-Kutta method {\ttfamily ode5} from
\url{https://www.mathworks.com/matlabcentral/answers/98293-is-there-a-fixed-step-ordinary-differential-
equation-ode-solver-in-matlab-8-0-r2012b\#answer\\107643}.
Corresponding numerical simulation of the considered trajectory
can be performed using the following code:
{\ttfamily phiT\_u = feval('ode5', @(t, u)
[-3.2425*u(1) + 3.2425*u(2) + 0.5*u(2)*u(3);
              6.485*u(1) - u(2) - u(1)*u(3);
                u(1)*u(2) - 0.85*u(3)],
  0 : 0.01 : 15295, [-2.089862710574761, -2.500837780529156, 2.776106323157132]);}
} the trajectory with initial data $u_{0} = (0.5,\,0.5,\,0.5)$ in the
vicinity of the $S_0$.
We discard the part of the trajectory, corresponding to the initial transition process
(for $[0,t_{\rm tp} = 25000]$),
and get the point $u_{\rm init} = ($ $-2.089862710574761$, $-2.500837780529156$, $2.776106323157132).$
Further, we numerically approximate the finite-time Lyapunov exponents
and dimension for the time interval $[0,~T]$
by Benettin's algorithm (see approximation \eqref{LCEapprox}
and MATLAB code in \citep{LET-1998}).


The trajectory computed on the time interval $[0, T_1\approx 15295]$
traces a \emph{chaotic set} in the phase space,
which looks like an ``\emph{attractor}'' (see Fig.~\ref{fig:rab:attr:chaotic}).
Further integration with $t > T_1$ leads to the collapse of the ``\emph{attractor}''
(see Fig.~\ref{fig:rab:attr:transient}),
i.e. the ``\emph{attractor}'' turns out to be a \emph{transient chaotic set}.
However on the time interval $t \in [0,~T_3\approx431560]$
we have $\LCEs_1(t, \, u_{\rm init}) > 0$ (see Fig.~\ref{fig:rab:LCE1:transient})
and, thus, one may conclude that the behavior is chaotic,
and for the time interval $t \in [0,~T_2\approx 223447]$
we have ${d}_{\rm L}^{\rm KY}(\{\LCEs_i(t, \, u_{\rm init})\}_{i=1}^3) > 2$
(see Fig.~\ref{fig:rab:LD:transient}).
This effect is due to the fact that the finite-time Lyapunov exponents and
finite-time Lyapunov dimension are averaged values over the considered time interval
and, therefore, may reflect a change in the qualitative behavior of the trajectory
with a delay.
Since the lifetime of transient chaotic process can be extremely long
and taking into account the limitations of reliable integration of chaotic ODEs,
the long-time computation of the finite-time Lyapunov exponents
and the finite-time Lyapunov dimension
does not necessary lead to a more relevant approximation
of the Lyapunov exponents and the Lyapunov dimension
(see also effects in \eqref{BenettinWrong}
for the approaches of Benettin~et~al.~\cite{BenettinGGS-1980-Part2} and Wolf~et~al.~\cite{WolfSSV-1985}).



\section{Conclusion} \label{sec:conclusion}
In this work the Rabinovich system,
describing the process of interaction between waves in plasma, is considered.
We show that the methods of numerical continuation and perpetual point
are helpful in localization and understanding of
hidden attractor in the Rabinovich system.
For the study of dimension of the hidden attractor
the notion of the \emph{finite-time Lyapunov dimension} is developed.
An approach to reliable numerical estimation of
the finite-time Lyapunov exponents (see relations \eqref{LEapprox}-\eqref{LEapproxuni})
and finite-time Lyapunov dimension (see relations \eqref{dimLmunest}) is suggested.
Various numerical estimates of the finite-time Lyapunov dimension
for the hidden attractor in the case of multistability are given.

\section*{Acknowledgements} \label{sec:acknowledgement}
The work in sec.~1-4 is done within the joint grant from DST and RFBR
(INT/RUS/RFBR/P-230 and 16-51-45002); 
in sec. 5-7 within  Russian Science Foundation project (14-21-00041).

\bibliographystyle{elsarticle-num}

\end{document}